\def\<{\leqslant}           
\def\>{\geqslant}           
\def\d{\partial}
\def\wh{\widehat}
\def\wt{\widetilde}
\def\Re{{\rm Re}}   
\def\Im{{\rm Im}}   
\def\col{{\rm vec}}   
\def\cH{{\cal H}}   
\def\mA{{\mathbb A}}    
\def\mR{{\mathbb R}}    
\def\Tr{{\rm Tr}}       
\def\rT{{\rm T}}        
\def\diam{\diamond}       
\def\bS{{\mathbf S}}
\def\bE{{\bf E}}    
\def\[[[{[\![\![}
\def\]]]{]\!]\!]}
\def\bra{{\langle}}
\def\ket{{\rangle}}
\def\Bra{\left\langle}
\def\Ket{\right\rangle}
\def\re{{\rm e}}        
\def\rd{{\rm d}}        
\def\fM{{\mathfrak M}}
\def\fN{{\mathfrak N}}
\def\fO{{\mathfrak O}}
\def\cL{{\mathcal L}}
\def\bA{{\mathbf A}}
\def\bJ{{\bf J}}
\def\br{{\bf r}}
\def\x{\times}
\def\ox{\otimes}
\def\cZ{{\mathcal Z}}
\def\bH{{\mathbf H}}
\def\cW{{\mathcal W}}
\def\cX{{\mathcal X}}
\def\cC{{\cal C}}
\def\cA{{\cal A}}
\def\cB{{\cal B}}
\def\cE{{\mathcal E}}
\def\cT{{\mathcal T}}
\def\mU{{\mathbb U}}
\def\mS{{\mathbb S}}
\def\veps{\varepsilon}
\begin{document}
\title{A dynamic programming
approach to finite-horizon coherent quantum LQG control\thanks{The work is supported by the Australian Research Council. A shortened version of this paper is to appear in the Proceedings of the Australian Control Conference, 10--11 November 2011, Melbourne, Australia. }
}


\author{Igor G. Vladimirov \and
        Ian R. Petersen 
}


\institute{I.G.Vladimirov ($\boxtimes$)\at
              School of Engineering and Information Technology, University of New South Wales at the Australian Defence Force Academy, Canberra ACT 2600, Australia\\
              \email{igor.g.vladimirov@gmail.com}           
           \and
           I.R.Petersen \at
              School of Engineering and Information Technology, University of New South Wales at the Australian Defence Force Academy, Canberra ACT 2600, Australia              \\
              \email{i.r.petersen@gmail.com}           
}

\date{}

\maketitle

\begin{abstract}
The paper is concerned with the coherent quantum Linear Quadratic Gaussian (CQLQG) control problem for time-varying quantum plants governed by linear quantum stochastic differential equations over a bounded time interval. A controller is sought among quantum linear systems satisfying physical realizability (PR) conditions.  The latter describe the  dynamic equivalence of the system to an open quantum harmonic oscillator  and relate its state-space  matrices to the free Hamiltonian, coupling and scattering operators of the  oscillator.
Using the Hamiltonian parameterization of PR controllers, the CQLQG problem is recast into an optimal control problem for a deterministic    system governed by a differential Lyapunov equation. The state of this subsidiary system is the symmetric part of the quantum covariance matrix of the plant-controller state vector. The resulting covariance control problem is  treated using  dynamic programming and Pontryagin's minimum principle. The associated Hamilton-Jacobi-Bellman equation  for the minimum cost function involves  Frechet differentiation with respect to matrix-valued variables. The gain matrices of the CQLQG optimal  controller  are shown to satisfy a quasi-separation property as a weaker quantum counterpart of the filtering/control decomposition of classical LQG controllers.
%

\keywords{
    Quantum control
    \and
    LQG cost
    \and
    Physical realizability
    \and
    Symplectic invariance
    \and
    Dynamic programming
    \and
    Pontryagin minimum principle
    \and
    Frechet differentiation
}
\subclass{81Q93 \and 81S25 \and 93E20 \and 49J50 \and 58C20}
\end{abstract}

\section{Introduction}

Quantum feedback control, which deals with dynamical systems whose variables are noncommutative linear operators on a Hilbert space  governed by the laws of quantum mechanics (see, for example, \cite{WM_2009}),  involves two major paradigms. One of them employs classical information on the quantum mechanical system retrieved through a macroscopic measuring device and thus accompanied by decoherence and the loss of quantum information, as reflected, for example, in the projection postulate \cite{H_2001}.
The other, less ``invasive'',  approach, apparently practiced by nature to stabilize matter on an atomic scale,  is through direct interaction of quantum mechanical systems, possibly mediated by light fields. With the advances in quantum optics and nanotechnology making it possible to manipulate such interconnection,  measurement-free coherent quantum controllers provide an important alternative to the classical observation-actuation control loop.

Coherent quantum feedback can be  implemented, for example,  using quantum-optical components, such as  optical cavities, beam splitters, phase shifters, and modelled by linear quantum stochastic differential equations (QSDEs) \cite{HP_1984,M_1995,P_1992} corresponding to open quantum harmonic oscillators \cite{EB_2005,GZ_2004}. The associated notion of physical realizability  (PR) \cite{JNP_2008,P_2010,SP_2009}  reflects the dynamic equivalence of a system to an open quantum harmonic oscillator.  Being organized as quadratic constraints on  the state-space matrices,  the PR conditions imposed on the controller  complicate the solution of quantum analogues to the classical Linear Quadratic Gaussian (LQG) and $\cH_{\infty}$-control problems, and it is particularly so in regard to the Coherent Quantum LQG (CQLQG) problem \cite{NJP_2009} which has yet to be solved.

The CQLQG control problem seeks a PR quantum controller to minimize the average output ``energy'' of the closed-loop system, described by the quantum expectation of a quadratic form of its state variables. For the original infinite-horizon time-invariant setting of the problem, a numerical procedure was proposed in \cite{NJP_2009} to compute \textit{suboptimal} controllers, and algebraic equations were obtained in \cite{VP_2011} for the \textit{optimal} CQLQG controller. Subtle coupling of the equations, which comes from the PR constraints,  is apparently related to the complicated nature of the suboptimal control design procedure. This suggests that an  alternative viewpoint needs to be taken for a better understanding of the structure of the optimal quantum controller.

In the present paper, the CQLQG control problem is approached by considering its time-varying version,  with a PR quantum controller being sought to minimize the average output energy of the closed-loop system over a bounded time interval.
We outline a dynamic programming approach to the finite-horizon time-varying CQLQG problem by recasting it as a deterministic  optimal control problem for a dynamical system governed  by a differential Lyapunov equation. The state of the subsidiary system is the symmetric part of the quantum covariance matrix of the plant-controller state vector. The role of control in this covariance control \cite{SIG_1998} problem is played by a triple of matrices from the Hamiltonian parameterization of a PR controller which relates its state-space matrices to the free Hamiltonian, coupling and scattering operators of an open quantum harmonic oscillator \cite{EB_2005}.

The dynamic programming  approach to the covariance control problem is developed in conjunction with Pontryagin's minimum  principle \cite{PBGM_1962}. The appropriate costate of the subsidiary dynamical system is shown to coincide with the observability Gramian of the underlying closed-loop system.
The resulting Hamilton-Jacobi-Bellman equation (HJBE) for the minimum cost function of the closed-loop system state covariance matrix involves  Frechet differentiation in noncommutative matrix-valued variables. Such partial differential equations (PDEs) were considered, for example,  in \cite{VP_2010a} in a different context of entropy variational problems for Gaussian diffusion processes.

Using the invariance of PR quantum controllers under the group of symplectic equivalence transformations of the state-space matrices (which is a salient feature of such controllers), we establish \textit{symplectic invariance} of the minimum cost function.      This reduces the minimization of Pontryagin's control Hamiltonian \cite{SW_1997} to two independent quadratic optimization problems which yield the gain matrices of the optimal CQLQG controller.  As in the time-invariant case \cite{VP_2011}, this partial decoupling is a weaker quantum counterpart of the filtering/control separation principle of classical LQG controllers  \cite{KS_1972}. The equations for the optimal quantum controller 
involve the inverse of special self-adjoint operators on matrices \cite{VP_2011}, which can be carried out through the  matrix vectorization  \cite{M_1988,SIG_1998}.

The paper is organised as follows.  Sections~\ref{sec:plant} and \ref{sec:controller}  specify the quantum plants and coherent quantum controllers being  considered. Section \ref{sec:PR} revisits PR conditions to make the exposition  self-contained. Section~\ref{sec:problem} formulates the CQLQG control problem. Section~\ref{sec:symplectic} derives a PDE for the minimum cost function from the symplectic invariance of PR controllers. This PDE is solved in Appendix \ref{sec:Vshape_proof} to provide an insight into the structure of the function.  Section \ref{sec:HJBE} establishes the HJBE for the minimum cost function and identifies the costate in the related Pontryagin's  minimum  principle as the observability Gramian. Section \ref{sec:gain} carries out the minimization involved in the HJBE and obtains equations for the optimal controller gain matrices, using the special  linear operators from Appendix \ref{sec:class}. Section \ref{sec:eqsummary} summarizes the system of equations for the optimal CQLQG controller. Section \ref{sec:conclusion} provides concluding remarks and outlines further research.

\section{Open quantum plant}\label{sec:plant}

The quantum plant considered below is an open quantum system which is coupled to another such system (playing  the role of a controller), with the dynamics of both systems affected by the environment.
At any time $t$, the plant is described by a $n$-dimensional vector $x_t$  of self-adjoint operators 
on a Hilbert space, with $n$ even.  The plant state vector $x_t$ evolves  in time and contributes to a $p_1$-dimensional output of the plant  $y_t$ (also with self-adjoint operator-valued entries) according to QSDEs
\begin{align}
\label{x}
    \rd x_t
    & =
    A_t x_t\rd t  +  B_t \rd w_t + E_t \rd \eta_t,\\
\label{y}
    \rd y_t
    & =
    C_t x_t\rd t  +  D_t \rd w_t.
\end{align}
Here, the matrices
$
    A_t\in \mR^{n\x n}
$,
$
    B_t\in \mR^{n\x m_1}
$,
$
    C_t\in \mR^{p_1\x n}
$,
$
    D_t\in \mR^{p_1\x m_1}
$,
$
    E_t\in \mR^{n\x p_2}
$ are known deterministic functions of time, which are assumed to be continuous for well-posedness of the QSDEs,
\begin{equation}
\label{z}
    z_t
    :=
     C_tx_t
\end{equation}
is the ``signal part'' of the plant output $y_t$, and $\eta_t$ is the output of the controller to be described in Section~\ref{sec:controller}.  The noise from the environment is represented by
an $m_1$-dimensional quantum Wiener process $w_t$  (with $m_1$ even) on the boson Fock space  \cite{P_1992} with a canonical Ito table
\begin{equation}
\label{wtable}
        \rd w_t \rd w_t^{\rT}
    =
    (I_{m_1}+iJ_1/2)\rd t.
\end{equation}
Here, $i$ is the imaginary unit, $I_m$ is the identity matrix of order $m$ (with the subscript sometimes omitted), and $J_1$ is a real antisymmetric matrix, which is given by
\begin{equation}
\label{J1}
    J_1
    :=
    I_{\mu_1}
    \ox
    \bJ,
    \qquad
    \bJ
    :=
    {\begin{bmatrix}
    0 & 1\\
    -1 & 0
    \end{bmatrix}}
\end{equation}
(with $\ox$ the Kronecker product of matrices, and $\mu_1:= m_1/2$)
and specifies the  canonical commutation relations (CCRs) for the quantum noise of the plant as
\begin{equation}
\label{wCCR}
        [
            \rd w_t, \rd w_t^{\rT}
        ]
    :=
        \rd w_t \rd w_t^{\rT}
        -
        (\rd w_t \rd w_t^{\rT})^{\rT}
    =
    i J_1\rd t.
\end{equation}
Vectors are assumed to be organized as columns unless indicated otherwise, and the transpose $(\cdot)^{\rT}$ acts on vectors and
matrices with operator-valued entries as if the latter were scalars.
Accordingly, the $(j,k)$th entry of the matrix  $[W,W^{\rT}]$, associated with a vector $W$ of operators $W_1, \ldots, W_r$,  is the commutator
$$
    [W_j,W_k]
    :=
    W_jW_k-W_kW_j.
$$
Also, $(\cdot)^{\dagger}:= ((\cdot)^{\#})^{\rT}$ denotes the transpose
of the entry-wise adjoint $(\cdot)^{\#}$. In application to ordinary matrices, $(\cdot)^{\dagger}$ is the complex conjugate transpose $(\overline{(\cdot)})^{\rT}$ and will be written as $(\cdot)^*$.

\section{Coherent quantum controller}\label{sec:controller}

A measurement-free coherent quantum controller is another quantum system with a $n$-dimensional state vector $\xi_t$ with self-adjoint operator-valued entries whose interconnection with the plant (\ref{x})--(\ref{z}) is described by the QSDEs
\begin{align}
\label{xi}
    \rd \xi_t
     & =
    a_t\xi_t\rd t + b_t \rd \omega_t + e_t\rd y_t,\\
\label{eta}
    \rd \eta_t
     & =
    c_t\xi_t \rd t + d_t\rd \omega_t.
\end{align}
Here,
$
    a_t \in \mR^{n\x n}
$,
$
    b_t\in \mR^{n\x m_2}
$,
$
    c_t\in \mR^{p_2\x n}
$,
$
    d_t\in \mR^{p_2\x m_2}
$,
$
    e_t\in \mR^{n\x p_1}
$
are deterministic  continuous functions of time, and,
similarly to (\ref{z}), the process
\begin{equation}
\label{zeta}
    \zeta_t
     :=
    c_t\xi_t
\end{equation}
is the signal part of the controller output $\eta_t$. The process $\omega_t$ in (\ref{xi}) and (\ref{eta}) is the  controller noise
which is assumed to be an $m_2$-dimensional quantum Wiener process (with $m_2$ even) that
commutes with the plant noise $w_t$ in (\ref{x}) and (\ref{y}) and also has a canonical Ito table
$        \rd \omega_t \rd \omega_t^{\rT}
    =
    (I_{m_2}+iJ_2/2)\rd t
$
with the CCR matrix
\begin{equation}
\label{J2}
    J_2
    :=
    I_{\mu_2}
    \ox
    \bJ,
\end{equation}
where $\mu_2:= m_2/2$. In view of (\ref{xi}), the matrices $b_t$ and $e_t$ will be referred to as the controller noise and observation gain matrices, even though $y_t$ is not an observation signal in the classical control theoretic sense.
The combined set of equations
 (\ref{x})--(\ref{z}) and (\ref{xi})--(\ref{zeta})
describes the fully quantum closed-loop system shown in Fig.~\ref{fig:system}.
\begin{figure}[htpb]
\centering
\unitlength=1mm
\begin{picture}(50.00,24.00)
    \put(5,5){\dashbox(40,20)[cc]{}}
    \put(10,10){\framebox(10,10)[cc]{plant}}
    \put(30,10){\framebox(10,10)[cc]{contr.}}
    \put(0,15){\vector(1,0){10}}
    \put(50,15){\vector(-1,0){10}}
    \put(30,18){\vector(-1,0){10}}
    \put(20,12){\vector(1,0){10}}
    \put(-2,15){\makebox(0,0)[rc]{$w_t$}}
    \put(25,19){\makebox(0,0)[cb]{$\eta_t$}}
    \put(52,15){\makebox(0,0)[lc]{$\omega_t$}}
    \put(25,10){\makebox(0,0)[ct]{$y_t$}}
\end{picture}
\caption{
    The plant and controller form a closed-loop quantum system described by  (\ref{x})--(\ref{z}) and (\ref{xi})--(\ref{zeta}), which is influenced by the environment through the quantum Wiener processes  $w_t$ and $\omega_t$.
}
\label{fig:system}
\end{figure}
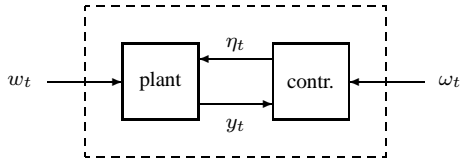
The process $\zeta_t$ in (\ref{zeta}) is analogous to the actuator signal in  classical control theory. Following the  classical approach, the performance of the quantum controller is described in terms of an $r$-dimensional process
\begin{equation}
\label{cZ}
    \cZ_t
     =
    F_t x_t  + G_t \zeta_t.
\end{equation}
Its entries are linear combinations of the plant state and ``actuator output'' variables whose relative importance  is specified by the matrices $F_t\in \mR^{r\x n}$ and $G_t\in \mR^{r\x m_2}$ which are known continuous deterministic functions of time $t$. The weighting matrices $F_t$, $G_t$ are free from physical constraints and their choice is dictated by the control design preferences. The $2n$-dimensional combined state vector
$$
    \cX_t
    :=
    \begin{bmatrix}
        x_t\\
        \xi_t
    \end{bmatrix}
$$
of the
closed-loop system  and the ``output'' $\cZ_t$ in (\ref{cZ}) are governed by the QSDEs
\begin{equation}
\label{closed}
    \rd \cX_t
      =
      \cA_t     \cX_t\rd t +   \cB_t      \rd \cW_t,
    \qquad
    \cZ_t
     =
      \cC_t       \cX_t,
\end{equation}
driven by the combined $(m_1+m_2)$-dimensional quantum Wiener process
$$
    \cW_t:=
    \begin{bmatrix}
        w_t\\
        \omega_t
    \end{bmatrix}
$$
with a block diagonal Ito table
\begin{equation}
\label{cWtable}
        \rd \cW_t \rd \cW_t^{\rT}
    =
    (I_{m_1+m_2}+iJ/2)\rd t,
    \qquad
    J
    :=
    I_{\mu_1 + \mu_2}
    \ox \bJ
\end{equation}
in conformance with (\ref{wtable})--(\ref{wCCR}), (\ref{J2}).
The state-space matrices  of the closed-loop system (\ref{closed}) are given by
\begin{equation}
\label{cABC}
    \left[
        \begin{array}{c|c}
              \cA_t     &   \cB_t     \\
            \hline
              \cC_t       & 0
        \end{array}
    \right]
    =
    {\left[
        \begin{array}{cc|cc}
                  A_t       & E_t c_t         & B_t       & E_td_t\\
                  e_t C_t   & a_t         & e_t D_t   & b_t\\
            \hline
                  F_t & G_t c_t &  0 & 0
        \end{array}
    \right]}.
\end{equation}
\section{Physical realizability conditions}\label{sec:PR}

The CCRs for the closed-loop system state vector are described by a real antisymmetric matrix
\begin{equation}
\label{cXCCR}
    \Theta_t
    :=
    -i[\cX_t, \cX_t^{\rT}]
    =
    2\Im \bE(\cX_t \cX_t^{\rT}),
\end{equation}
which, up to a factor of 2,  coincides with the entrywise imaginary part of the quantum covariance matrix of $\cX_t$, with $\bE(\cdot)$ the quantum expectation (associated, in what follows,  with the vacuum state). The matrix $\Theta_t$ evolves in time according to a  differential Lyapunov equation
\begin{equation}
\label{Thetadot}
    \dot{\Theta}_t
    =
    \cA_t \Theta_t + \Theta_t \cA_t^{\rT}
    +
    \cB_t J \cB_t^{\rT},
\end{equation}
where $J$ is the CCR matrix of the combined quantum Wiener process $\cW_t$ from (\ref{cWtable}) in the sense that $[\rd\cW_t,\rd \cW_t^{\rT}] = iJ\rd t$.
Indeed, by employing  the ideas of \cite[Proof of Theorem~2.1 on pp. 1798--1799]{JNP_2008} and combining the quantum Ito formula with the bilinearity of the commutator as
$$
    \rd [X,Y]
    =
    [\rd X, Y] + [X, \rd Y] + [\rd X,\rd Y],
$$
and using the adaptedness of the state process $\cX_t$ and the quantum Ito product rules  \cite{P_1992}, it follows that
\begin{align*}
    \rd [\cX_t, \cX_t^{\rT}]
     = &
    [
        \cA_t \cX_t \rd t + \cB_t \rd \cW_t,
        \cX_t^{\rT}
    ] +
    [
        \cX_t,
        \cX_t^{\rT}\cA_t^{\rT} \rd t + \rd \cW_t^{\rT}\cB_t^{\rT}
    ]    \\
     & +
    [
        \cA_t \cX_t \rd t + \cB_t \rd \cW_t,
        \cX_t^{\rT}\cA_t^{\rT} \rd t + \rd \cW_t^{\rT}\cB_t^{\rT}
    ]\\
     = &
    (\cA_t
    [
        \cX_t,
        \cX_t^{\rT}
    ]
    +
    [
        \cX_t,
        \cX_t^{\rT}
    ]
    \cA_t^{\rT})\rd t
    +
    \cB_t
    [
        \rd \cW_t,
        \rd \cW_t^{\rT} ]\cB_t^{\rT},
\end{align*}
which, upon division by $i$, yields (\ref{Thetadot}). Now, suppose the initial plant and controller state vectors commute with each other, so that $[x_0, \xi_0^{\rT}] = 0$ and the matrix (\ref{cXCCR}) is initialized by
\begin{equation}
\label{Theta0}
    \Theta_0
    =
    {\begin{bmatrix}
        K_1 & 0 \\
        0 & K_2
    \end{bmatrix}},
\end{equation}
where $K_1$, $K_2$ are nonsingular real antisymmetric matrices.
Then the CCR matrix of $\cX_t$ is preserved in time if and only if
\begin{equation}
\label{Thetadot0}
    \cA_t \Theta_0 + \Theta_0 \cA_t^{\rT}
    +
    \cB_t J \cB_t^{\rT} = 0
\end{equation}
for any $t\> 0$. The left-hand side of (\ref{Thetadot0}) is always an antisymmetric matrix.
Hence, by computing two diagonal and one off-diagonal blocks of this matrix with the aid of (\ref{cABC}),  it follows that the CCR preservation is equivalent to three equations
\begin{align}
\label{CCR11}
    A_t K_1 + K_1 A_t^{\rT} + B_t J_1 B_t^{\rT} + E_t d_t J_2 d_t^{\rT}E_t^{\rT} &= 0,\\
\label{CCR12}
    E_t\underbrace{(c_t K_2 + d_t J_2 b_t^{\rT})}_{\rm controller}
    +
    \underbrace{(K_1 C_t^{\rT} + B_t J_1 D_t^{\rT})}_{\rm plant}e_t^{\rT}& =  0, \\
\label{CCR22}
    a_t K_2 + K_2 a_t^{\rT} + e_t D_t J_1 D_t^{\rT}e_t^{\rT} + b_t J_2 b_t^{\rT}  &= 0
\end{align}
to be satisfied at any time $t$.
Therefore, the fulfillment  of the equalities
\begin{align}
\label{CCR12_cont}
    c_t K_2 + d_t J_2 b_t^{\rT}& =  0, \\
\label{CCR12_plant}
    C_t K_1    + D_t J_1 B_t^{\rT}
    & =  0,
\end{align}
which pertain  to the controller and the plant, respectively, is sufficient for (\ref{CCR12}). Note that (\ref{CCR22}) and (\ref{CCR12_cont}) are the conditions of physical realizability (PR) \cite{JNP_2008,NJP_2009}  of the quantum controller in the sense of equivalence of its input-output operator to an open quantum harmonic oscillator  \cite{GZ_2004}. In a similar fashion, (\ref{CCR11}) and (\ref{CCR12_plant}) are the PR conditions for the quantum plant.


\begin{lemma}
Suppose the quantum plant satisfies the PR conditions (\ref{CCR11}), (\ref{CCR12_plant}) and the matrix $E_t$ is of full column rank.  Then the closed-loop system state CCR matrix $\Theta_0$ in (\ref{Theta0}) is preserved  if and only if the controller satisfies the PR conditions (\ref{CCR22}), (\ref{CCR12_cont}).
\end{lemma}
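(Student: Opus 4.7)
The plan is to leverage the block decomposition already carried out in the text: preservation of $\Theta_0$ is equivalent to the differential Lyapunov equation (\ref{Thetadot}) having vanishing right-hand side at $\Theta_t=\Theta_0$, which, by the computation of the two diagonal and one off-diagonal blocks via (\ref{cABC}), is in turn equivalent to the simultaneous validity of the three matrix equations (\ref{CCR11}), (\ref{CCR12}), (\ref{CCR22}). With this reduction in hand, the lemma becomes a purely algebraic statement relating (\ref{CCR12}) to (\ref{CCR12_cont}) under the plant PR hypothesis (\ref{CCR12_plant}) and the full column rank assumption on $E_t$.

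For the sufficiency (``if'') direction I would assume the controller PR conditions (\ref{CCR22}) and (\ref{CCR12_cont}) together with the plant PR conditions (\ref{CCR11}) and (\ref{CCR12_plant}). Equations (\ref{CCR11}) and (\ref{CCR22}) are then immediate, and (\ref{CCR12}) follows by pre-multiplying (\ref{CCR12_cont}) by $E_t$, transposing (\ref{CCR12_plant}) and post-multiplying by $e_t^{\rT}$, and adding; the antisymmetry of $K_1$ and $J_1$ ensures that the transposed plant identity is exactly $-(K_1 C_t^{\rT}+B_t J_1 D_t^{\rT})=0$, so the two contributions combine to the left-hand side of (\ref{CCR12}). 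The three block conditions then hold, so $\Theta_t\equiv \Theta_0$.

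For the necessity (``only if'') direction I would assume that $\Theta_0$ is preserved, and hence that (\ref{CCR11}), (\ref{CCR12}) and (\ref{CCR22}) all hold. The controller identity (\ref{CCR22}) is one of these three, so it is immediate. To extract (\ref{CCR12_cont}) I would use the plant PR identity (\ref{CCR12_plant}), transposed as above via the antisymmetry of $K_1$ and $J_1$, to show that $K_1 C_t^{\rT}+B_t J_1 D_t^{\rT}=0$; substituting this into (\ref{CCR12}) reduces the latter to
\[
    E_t(c_t K_2 + d_t J_2 b_t^{\rT}) = 0.
\]
The full column rank assumption on $E_t$ (so that $E_t X=0$ forces $X=0$) then yields (\ref{CCR12_cont}).

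I do not anticipate a genuine obstacle: the nontrivial content is organisational rather than computational, and the only place where the hypotheses are actually used in an essential way is the cancellation in the off-diagonal block (\ref{CCR12}), where the antisymmetry of the CCR data is needed to transpose the plant PR identity, and the full column rank of $E_t$ is needed to strip the prefactor $E_t$ from the resulting equation. Without the latter assumption one would only obtain $c_t K_2+d_t J_2 b_t^{\rT}\in \ker E_t$, which is the natural weaker conclusion.
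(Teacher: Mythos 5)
Your proposal is correct and follows essentially the same route as the paper: preservation of $\Theta_0$ is reduced to the three block equations (\ref{CCR11}), (\ref{CCR12}), (\ref{CCR22}), the ``if'' part follows from the sufficiency of (\ref{CCR12_cont}) and (\ref{CCR12_plant}) for (\ref{CCR12}), and the ``only if'' part uses the transposed plant identity (via antisymmetry of $K_1$, $J_1$) to reduce (\ref{CCR12}) to $E_t(c_t K_2 + d_t J_2 b_t^{\rT})=0$ and then strips $E_t$ by full column rank. The only cosmetic difference is that you invoke injectivity of $E_t$ directly, whereas the paper left-multiplies by the explicit left inverse $(E_t^{\rT}E_t)^{-1}E_t^{\rT}$; these are equivalent.
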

\begin{proof}
The ``if'' part of the lemma was considered above. The ``only if'' claim is established by left multiplying both sides of (\ref{CCR12}) by $(E_t^{\rT}E_t)^{-1}E_t^{\rT}$. This is valid  if $E_t$ is of full column rank and yields
$$
    c_t K_2 + d_t J_2 b_t^{\rT}
    =
    -
    (E_t^{\rT}E_t)^{-1}E_t^{\rT}
    (K_1 C_t^{\rT} + B_t J_1 D_t^{\rT}).
$$
In this case, the fulfillment of the second PR condition for the plant (\ref{CCR12_plant}) indeed entails the PR condition (\ref{CCR12_cont}) for the controller.
\end{proof}

The fact that the CCR preservation property for the closed-loop system state alone ``covers'' the separate PR conditions for the plant and controller as input-output operators is  explained by the ``internalization'' of their outputs which become part of the state dynamics when the systems are coupled.
In what follows, the controller state CCR matrix is assumed to be given by
\begin{equation}
\label{J0}
    K_2= I_{\nu}\ox \bJ =: J_0,
\end{equation}
where $\nu := n/2$,
so that the controller PR conditions (\ref{CCR22}) and (\ref{CCR12_cont}) take the form
\begin{align}
\label{PR1}
    a_tJ_0 +J_0a_t^{\rT}
    +
    e_t D_t J_1 D_t^{\rT}e_t^{\rT} + b_t J_2 b_t^{\rT}
    & =  0,\\
\label{PR2}
    c_t J_0 + d_t J_2 b_t^{\rT}& =  0.
\end{align}
%
%
The first PR condition  (\ref{PR1}) is a linear equation with respect to $a_t$ whose solutions are parameterized by real symmetric matrices $R_t$ of order $n$ as
\begin{equation}
\label{a}
    a_t
    =
    \underbrace{(e_t D_t J_1 D_t^{\rT}e_t^{\rT} + b_t J_2 b_t^{\rT})J_0/2}_
    {\wt{a}_t}
    \ +
    J_0 R_t.
\end{equation}
These solutions
form an affine subspace in $\mR^{n\x n}$ obtained by translating the linear subspace of Hamiltonian matrices
$$
    \{
        a \in \mR^{n\x n}:\
        a J_0 + J_0 a^{\rT} = 0
    \}
    =
    J_0 \mS_n
    =
    \mS_n J_0
$$
by a skew-Hamiltonian  matrix $\wt{a}_t$,
that is, a particular solution of (\ref{PR1}) which is a quadratic function
of $b_t$ and $e_t$.
Here, $\mS_n$ denotes the subspace of real symmetric matrices of order $n$, and $R_t\in \mS_n$ specifies the
free Hamiltonian operator $\xi_t^{\rT} R_t \xi_t/2$ of the quantum harmonic oscillator
\cite[Eqs. (20)--(22) on pp. 8--9]{EB_2005}. Note that (\ref{a}) is
the orthogonal decomposition of $a_t$ into projections onto the subspaces of skew-Hamiltonian and Hamiltonian matrices in the sense of  the Frobenius inner product
$$
    \bra X, Y\ket
    :=
    \Tr(X^{\rT}Y).
$$
Since the canonical structure of $J_0$ in (\ref{J0}) implies that $J_0^{-1} = -J_0$, the second PR condition (\ref{PR2}) allows the matrix $c_t$ to be expressed in terms of $b_t$ as
\begin{equation}
\label{c}
    c_t
    =
    d_t
    J_2
    b_t^{\rT}
    J_0.
\end{equation}
The matrix $d_t$, which quantifies the instantaneous gain of the controller output $\eta_t$ with respect to the controller noise $\omega_t$, is assumed to be fixed. Then (\ref{a}),  (\ref{c}) completely parameterize the state-space matrices of a PR controller by the matrix triple
\begin{equation}
\label{u}
    u_t
    :=
    (b_t,e_t, R_t),
\end{equation}
which will be regarded as an element of the Hilbert space $\mU:=\mR^{n\x m_2}\x \mR^{n\x p_1}\x \mS_n$ with the inherited inner product $\bra (b,e,R),(\beta,\epsilon,\rho) \ket := \bra b,\beta\ket+\bra e,\epsilon\ket +\bra R,\rho\ket$.

\section{Coherent quantum LQG control problem}\label{sec:problem}

In extending the infinite-horizon time invariant case from  \cite{NJP_2009,VP_2011}, the CQLQG control problem  is formulated as the minimization of the average output ``energy'' of the closed-loop system (\ref{closed}) over a bounded time interval $[0,T]$:
\begin{equation}
\label{E}
    \cE_T
    :=
        \int_{0}^{T}\!
        \bE
        (
            \cZ_t^{\rT} \cZ_t
        )
        \rd t
        =
    \int_{0}^{T}
    \bra
        \cC_t^{\rT}\cC_t,
        P_t
    \ket
    \rd t
    \longrightarrow
    \min,
\end{equation}
where the minimum is taken over the maps $t\mapsto u_t$ in (\ref{u}) which parameterize the   $n$-dimensional controllers (\ref{xi})--(\ref{zeta}) satisfying the PR conditions (\ref{PR1}), (\ref{PR2}). Here, $Z_t^{\rT} Z_t$ is the sum of squared entries of the vector $Z_t$  from (\ref{cZ}), which are self-adjoint quantum mechanical  operators. Also,
\begin{equation}
\label{P}
    P_t
    :=
    \Re
    \bE
    (
        \cX_t\cX_t^{\rT}
    )
\end{equation}
is a real positive semi-definite symmetric matrix of order $2n$ (we denote the set of such matrices by $\mS_{2n}^+$) which is the entrywise real part of the quantum covariance matrix of the closed-loop system state vector $\cX_t$. The matrix $P_t$ satisfies the differential Lyapunov equation
\begin{equation}
\label{Pdot}
    \dot{P}_t
    =
      \cA_t     P_t
    +
     P_t   \cA_t    ^{\rT}
    +
      \cB_t       \cB_t    ^{\rT}
      =:
    \cL_{t,u_t}(P_t).
\end{equation}
The affine operator $\cL_{t,u_t}$ is the infinitesimal generator of a two-parameter semi-group, which acts on $\mS_{2n}^+$ and depends on the triple $u_t$ of current matrices  of the PR controller from (\ref{u}) through (\ref{cABC}), (\ref{a}), (\ref{c}).
If $P_0$ were zero (which is forbidden by the positive semi-definiteness  of the quantum covariance matrix $\bE(\cX_0\cX_0^{\rT})=P_0 + i\Theta_0 /2\succcurlyeq 0$), then $P_t$ would coincide with the controllability
Gramian, over the time interval $[0,t]$,   of a classical linear time-varying system with the state-space realization triple $( \cA_t    ,  \cB_t     ,  \cC_t      )$ driven by a standard Wiener process.
The fact that $ \cE_T $ in (\ref{E}) is representable as the LQG cost of a classical system reduces the CQLQG problem to a constrained LQG control problem for an equivalent classical plant
\begin{equation}
\label{class_plant}
    \left[
        {\begin{array}{l||lc|c}
                  A_t     &   B_t   & E_t d_t & E_t\\
                \hline
                \hline
                  F_t    &   0   & 0 & G_t\\
                  \hline
                  0     &  0  & I & 0\\
                  C_t     &  D_t  & 0 & 0
        \end{array}}
    \right]
\end{equation}
driven by an $(m_1+m_2)$-dimensional standard Wiener process, with the controller being noiseless. In accordance with the standard convention, the block structure of the  state-space realization in (\ref{class_plant}) corresponds to partitioning the input into the noise and control, and the output into the to-be-controlled and observation signals.
We will develop a dynamic programming approach to (\ref{E}) as an optimal control problem for a subsidiary dynamical system with state $P_t$ in (\ref{P}) governed by the ODE  (\ref{Pdot}) whose right-hand side is specified by the matrix triple  $u_t$ from (\ref{u}). With the time horizon $T$ assumed to be fixed, the minimum cost  function is defined by
\begin{equation}
\label{V}
    V_t(P)
    :=
    \inf
    \int_{t}^{T}
        \bra
        \cC_s^{\rT}\cC_s,
        P_s
    \ket
    \rd s
\end{equation}
for any $t \in [0,T]$ and $P\in \mS_{2n}^+$.
Here, the infimum is taken over all admissible state-space matrices of the PR controller on the time interval $[t,T]$, provided the initial symmetric covariance matrix of the closed-loop system state vector is $\Re \bE(\cX_t\cX_t^{\rT})= P$.

\section{Symplectic invariance}\label{sec:symplectic}
As in the time-invariant case  \cite{VP_2011}, the PR conditions (\ref{PR1})--(\ref{PR2}) are invariant with respect to  the group of \textit{symplectic} similarity transformations of the controller matrices
$$
    a_t \mapsto \sigma a_t \sigma^{-1},
    \quad
    b_t \mapsto \sigma b_t,
    \quad
    e_t \mapsto \sigma e_t,
    \quad
    c_t \mapsto c_t\sigma^{-1},
$$
where $\sigma$ is an arbitrary (possibly, time-varying) symplectic matrix of order $n$ in the sense that  $\sigma J_0\sigma^{\rT} = J_0$. This corresponds to the canonical state transformation $\xi_t \mapsto \sigma \xi_t$; see also \cite[Eqs. (12)--(14)]{S_2000}.  Any such transformation of a PR controller leads to its  equivalent state-space representation, with the matrix $R_t$ transformed as $R_t\mapsto \sigma^{-\rT}R_t \sigma^{-1}$, where
$(\cdot)^{-\rT}:= ((\cdot)^{-1})^{\rT}$. 
Hence,
the minimum cost function $V_t(P)$ in (\ref{V}) is invariant under the corresponding group of  transformations of the closed-loop system state covariance matrix $P$, that is,
 \begin{equation}
\label{VV}
    V_t(S P S^{\rT})
    =
    V_t(P),
    \qquad
    S
    :=
    \begin{bmatrix}
        I & 0\\
        0 & \sigma
    \end{bmatrix}
\end{equation}
for any symplectic matrix $\sigma$.
 Assuming that $V_t(P)$ is Frechet differentiable in  $P$, its \textit{symplectic invariance} (\ref{VV}) can be described in differential terms. To formulate the lemma below, the matrix
$P\in \mS_{2n}^+$ is split into blocks as
\begin{equation}
\label{Pblocks}
    P
    :=
    {\begin{array}{cc}
    {}_{\leftarrow n \rightarrow}  {}_{\leftarrow n\rightarrow} &\\
            {\begin{bmatrix}
                P_{11} & P_{12}\\
                P_{21} & P_{22}
            \end{bmatrix}}
    &\!\!
        {\begin{matrix}
            \updownarrow\!{}^n\\
            \updownarrow\!{}_n
        \end{matrix}}\\
        {}
    \end{array}}
    =
    {\begin{array}{cc}
    {}_{\leftarrow n \rightarrow}  {}_{\leftarrow n\rightarrow} &\\
            {\begin{bmatrix}
                P_{\bullet 1} & P_{\bullet 2}
            \end{bmatrix}}
    &\!\!
            \updownarrow^{2n}
        \\ {}
    \end{array}}
    =
    {\begin{array}{cc}
    {}_{\leftarrow 2n \rightarrow}\\
            {\begin{bmatrix}
                P_{1\bullet}\\
                P_{2\bullet}
            \end{bmatrix}}
    &\!\!
        {\begin{matrix}
            \updownarrow\!{}^n\\
            \updownarrow\!{}_n
        \end{matrix}}\\
        {}
    \end{array}},
\end{equation}
where $P_{11}$ is associated with the state variables of the
plant, whilst $P_{22}$ pertains to those of the controller.
The $\mS_{2n}$-valued Frechet derivative of the minimum cost function has an analogous partitioning
\begin{equation}
    Q_t(P)
     :=
     \d_PV_t(P)
     =
\label{Q}
    {\begin{bmatrix}
        \d_{P_{11}} V_t      & \d_{P_{12}} V_t/2\\
        \d_{P_{21}} V_t/2    &  \d_{P_{22}} V_t
    \end{bmatrix}},
\end{equation}
where the $1/2$-factor takes into account the symmetry of $P$. Note that $Q_T\equiv 0$ since $V_T\equiv 0$ in view of (\ref{V}).  Associated with $V_t$ is a map $H_t: \mS_{2n}^+ \to \mR^{2n \x 2n}$ defined by
\begin{equation}
\label{H}
    H_t(P)
    :=
    Q_t(P)P,
\end{equation}
which is also partitioned into blocks as in (\ref{Pblocks}) except that   the matrix $H_t(P)$ is not necessarily symmetric. Also,
\begin{equation}
\label{bH}
    \bH(N)
     :=
     -J_0\bS(J_0 N)
     =
     -\bS(NJ_0)J_0
     =
    (N+J_0N^{\rT}J_0)/2
\end{equation}
denotes the orthogonal projection of a matrix $N \in \mR^{n\x n}$ onto the subspace of Hamiltonian matrices, with $\bS$ the \textit{symmetrizer}  defined by
\begin{equation}
\label{bS}
    \bS(N)
    :=
    (N+N^{\rT})/2.
\end{equation}


\begin{lemma}
\label{lem:VPDE}
Suppose the minimum cost function $V_t(P)$ in (\ref{V}) is Frechet differentiable with respect to  $P\in \mS_{2n}^+$. Then it satisfies the PDE
\begin{equation}
\label{VPDE}
    \bH(
        H_t^{22}(P)
    )
    =
    0,
\end{equation}
which means that the controller block of the matrix $H_t(P)$ from (\ref{H}) is a skew-Hamiltonian matrix.
\end{lemma}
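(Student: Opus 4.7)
The plan is to extract the PDE infinitesimally from the global symplectic invariance (\ref{VV}). Since $V_t(SPS^{\rT}) = V_t(P)$ holds for every symplectic $\sigma$, I would differentiate along smooth one-parameter subgroups of $\mathrm{Sp}(n)$ passing through the identity. The tangent space to $\mathrm{Sp}(n)$ at $\sigma = I_n$ is the subspace of Hamiltonian matrices $\mathfrak{sp}(n) := \{A \in \mR^{n\x n} : AJ_0 + J_0 A^{\rT} = 0\} = J_0\mS_n$. Hence an admissible infinitesimal variation of $S = \diag(I_n,\sigma)$ at $\sigma = I_n$ is of the form $\dot S = \begin{bmatrix} 0 & 0 \\ 0 & A \end{bmatrix}$ with $A$ an arbitrary Hamiltonian matrix.

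Applying the chain rule with the Frechet derivative $Q_t(P) = \d_P V_t(P)$ from (\ref{Q}), the invariance (\ref{VV}) yields $\bra Q_t(P), \dot S\, P + P \dot S^{\rT} \ket = 0$. Using the symmetry of $Q_t$ and $P$ together with cyclicity of the trace, both summands reduce to a single trace, giving $\Tr(\dot S \cdot P Q_t(P)) = \Tr(\dot S\, H_t(P)^{\rT}) = 0$ in view of (\ref{H}) and the identity $(Q_tP)^{\rT}=PQ_t$. The block form of $\dot S$ collapses this to
\begin{equation*}
    \Tr\bigl(A\,(H_t^{22}(P))^{\rT}\bigr) = 0 \qquad \text{for all } A \in \mathfrak{sp}(n).
\end{equation*}

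Translating this Frobenius orthogonality into the stated PDE is the final algebraic step. Writing $A = J_0 S$ with $S \in \mS_n$ arbitrary, the vanishing of $\Tr(J_0 S\,(H_t^{22})^{\rT}) = \Tr(S\,(H_t^{22})^{\rT} J_0)$ over all symmetric $S$ forces $(H_t^{22})^{\rT} J_0$ to be antisymmetric, i.e.\ $J_0 H_t^{22} = (H_t^{22})^{\rT} J_0$. Right-multiplying by $J_0$ and invoking $J_0^2 = -I_n$ rewrites this as $H_t^{22} + J_0 (H_t^{22})^{\rT} J_0 = 0$, which is exactly $\bH(H_t^{22}(P)) = 0$ by the definition (\ref{bH}). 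This is the asserted identity (\ref{VPDE}) and expresses that $H_t^{22}(P)$ is skew-Hamiltonian.

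The main obstacle I anticipate is essentially bookkeeping: aligning the block partitioning of $H_t = Q_tP$ from (\ref{Pblocks}) with the lower-right placement of $\dot\sigma$, and cleanly unwinding the duality $\mathfrak{sp}(n)^{\perp}$ = skew-Hamiltonian matrices. A minor technical point is the standing assumption that $V_t$ is Frechet differentiable, which is built into the hypothesis; no surjectivity of $\exp : \mathfrak{sp}(n) \to \mathrm{Sp}(n)$ onto the whole group is needed, since only first-order information at the identity is used.
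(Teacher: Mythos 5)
Your proposal is correct and follows essentially the same route as the paper: differentiate the symplectic invariance (\ref{VV}) along one-parameter subgroups $\sigma_{\veps}=\re^{\veps\tau}$ generated by Hamiltonian matrices, reduce the vanishing Lie derivative to the Frobenius orthogonality $\bra H_t^{22}(P),\tau\ket=0$ for all Hamiltonian $\tau$, and conclude that the Hamiltonian projection $\bH(H_t^{22}(P))$ vanishes. The trace bookkeeping and the identification of $\{NJ_0+J_0N^{\rT}=0\}^{\perp}$ with the skew-Hamiltonian matrices match the paper's computation in (\ref{Lieder}).
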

\begin{proof}
The transformations $P\mapsto S P S^{\rT}$ in (\ref{VV}) form a Lie group 
whose tangent space can be identified with the subspace of Hamiltonian matrices $\tau$ of order $n$.  As the matrix exponential of a Hamiltonian matrix, $\sigma_{\veps}:= \re^{\veps \tau}$ is a symplectic matrix for any real $\veps$. Therefore, if $V_t$ is smooth, then by differentiating the left-hand side of (\ref{VV}) with $S_{\veps}:= {\scriptsize\begin{bmatrix} I & 0\\ 0 & \sigma_{\veps}\end{bmatrix}}$ as a composite function at $\veps=0$, it follows that the resulting Lie derivative  \cite{G_1977} of $V_t$ vanishes:
\begin{align}
\nonumber
    0
    &=
    \left.
    \d_{\veps}
    V_t(S_{\veps} P S_{\veps}^{\rT})
    \right|_{\veps = 0}
    =
    2
    \Bra
        \d_PV_t,
        {\scriptsize\begin{bmatrix}
        0\\
        I
        \end{bmatrix}}
        \tau
        P_{2\bullet}
    \Ket\\
\label{Lieder}
    &=2
    \Bra
        {\begin{bmatrix}
        0 &         I
        \end{bmatrix}}
        Q_t
        P_{\bullet 2},
        \tau
    \Ket
    \,=
    2
    \bra
        \bH(H_t^{22}(P)),
        \tau
    \ket.
\end{align}
Here, the notations  (\ref{Pblocks})--(\ref{H}) are used.
Since the relation (\ref{Lieder}) is valid for any Hamiltonian matrix $\tau$, then  (\ref{VPDE}) follows.
\end{proof}
%
%

The relation (\ref{VPDE}) is, in fact, a system of first order scalar homogeneous linear  PDEs which are associated with $n(n+1)/2$ entries of a real symmetric matrix of order $n$. This system is underdetermined since,  for a fixed $P_{11}$,  the total number of independent scalar variables is $n(3n+1)/2$. As we will show in Appendix~\ref{sec:Vshape_proof},  this system of PDEs satisfies the involutivity condition and is locally completely integrable by the Frobenius integration theorem  \cite{G_1977}.
Instead of ``disassembling'' (\ref{VPDE}) into scalar equations which would lead to the loss of the underlying algebraic structure, it can be treated as one PDE  with noncommutative matrix-valued variables.
Such PDEs were encountered, for example,  in entropy variational problems for Gaussian diffusion processes \cite{VP_2010a}. 
The general solution of the PDE (\ref{VPDE}) is given below.
%
%

\begin{theorem}
\label{th:Vshape}
Suppose $f_t: \mS_n^+\x \mR^{n\x n}\to \mR$ is a continuously Frechet differentiable function. Then the function
\begin{equation}
\label{Vshape}
    V_t(P)
    :=
    f_t(P_{11}, P_{12}(P_{22}^{-1} + J_0)P_{21}),
\end{equation}
defined for $P\in \mS_{2n}^+$ with $P_{22} \succ 0$, satisfies the PDE (\ref{VPDE}). Moreover, (\ref{Vshape}) describes  a general smooth solution of the PDE over any connected component of the set $\{P\in \mS_{2n}^+:\, \det P_{12} \ne 0,\, P_{22} \succ 0\}$.
\end{theorem}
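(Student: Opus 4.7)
My plan is to prove the theorem in two stages: (i) verify that (\ref{Vshape}) is a symplectically invariant function of $P$, which by Lemma \ref{lem:VPDE} makes it a solution of (\ref{VPDE}); and (ii) show that on any connected component of
\[
  \Omega := \{P\in \mS_{2n}^+ :\, \det P_{12}\ne 0,\ P_{22}\succ 0\},
\]
the pair $(P_{11},\, P_{12}(P_{22}^{-1}+J_0)P_{21})$ is a complete set of invariants separating orbits of the action $P\mapsto SPS^{\rT}$ with $S = \diag(I,\sigma)$, so that every smooth solution of (\ref{VPDE}) necessarily factors through this pair.

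For (i), I would note that $P_{11}$ is obviously fixed by $P\mapsto SPS^{\rT}$, and for the second slot I would use that $\sigma J_0 \sigma^{\rT} = J_0$ implies the companion identity $\sigma^{\rT} J_0 \sigma = J_0$ (the symplectic group is closed under transposition). A one-line substitution then gives
\begin{equation*}
    (P_{12}\sigma^{\rT})\bigl(\sigma^{-\rT} P_{22}^{-1}\sigma^{-1} + J_0\bigr)(\sigma P_{21})
    = P_{12}\bigl(P_{22}^{-1} + \sigma^{\rT} J_0 \sigma\bigr) P_{21}
    = P_{12}(P_{22}^{-1}+J_0) P_{21},
\end{equation*}
so $V_t$ in (\ref{Vshape}) is symplectically invariant, and reading Lemma \ref{lem:VPDE} from right to left (using connectedness of $Sp(n,\mR)$ to integrate the infinitesimal condition) yields the PDE.

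For (ii), the PDE (\ref{VPDE}) expresses the vanishing of the Lie derivative of $V_t$ along every Hamiltonian generator $\tau$ of the symplectic action on the $P_{22}$-block, so by connectedness of $Sp(n,\mR)$ together with the Frobenius integrability asserted in the text, any smooth solution is constant on orbits of $P\mapsto SPS^{\rT}$. The key step is then to verify that $\phi(P) := (P_{11},\, P_{12}(P_{22}^{-1}+J_0)P_{21})$ actually separates orbits on $\Omega$: given $P,P'\in\Omega$ with $\phi(P)=\phi(P')$, I would set $\sigma^{\rT} := P_{12}^{-1}P'_{12}$ (well defined since $\det P_{12}\ne 0$). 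The antisymmetric part of the shared invariant, $P_{12}J_0 P_{12}^{\rT} = P'_{12} J_0 {P'_{12}}^{\rT}$, then forces $\sigma^{\rT} J_0 \sigma = J_0$, so $\sigma\in Sp(n,\mR)$; the symmetric part, $P_{12}P_{22}^{-1}P_{12}^{\rT} = P'_{12}{P'_{22}}^{-1}{P'_{12}}^{\rT}$, forces $\sigma P_{22}\sigma^{\rT} = P'_{22}$; and $P_{11}=P'_{11}$ is automatic, giving $P' = SPS^{\rT}$ with $S=\diag(I,\sigma)$. The matching dimension count
\[
  \dim \mS_{2n}^+ - \dim Sp(n,\mR)
  = n(2n+1) - n(n+1)/2
  = n(3n+1)/2
  = \dim \mS_n + \dim \mR^{n\x n}
\]
confirms that $\phi$ is a submersion of maximal rank on $\Omega$, so the orbit-invariant $V_t$ descends smoothly to a function $f_t$ with the regularity claimed.

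The main obstacle will be the Lie-derivative/Frobenius step in (ii): turning the pointwise matrix PDE (\ref{VPDE}) into genuine global orbit invariance (rather than merely infinitesimal invariance) requires the connectedness of $Sp(n,\mR)$ and the involutivity/integrability of the system promised in the text and used in Appendix \ref{sec:Vshape_proof}. Once that reduction is in place, the rest is the elementary orbit calculation above, which makes crucial use of the invertibility hypothesis on $P_{12}$ and of the precise way $J_0$ appears in the invariant so that the antisymmetric part pins down the symplecticity of $\sigma$ while the symmetric part pins down the congruence $P_{22}\mapsto P'_{22}$.
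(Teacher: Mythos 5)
Your argument is correct in substance but takes a genuinely different route from the paper's. The paper (Appendix A) fixes $t$ and $P_{11}$, sets $X=P_{12}$, $Y=P_{22}$, rewrites (\ref{VPDE}) as the matrix PDE $\bH(M(V))=0$ with $M(v)=\tfrac12 X^{\rT}\d_X v+Y\d_Y v$, and proves generality analytically: two particular-solution lemmas are obtained by a method-of-characteristics argument (smooth solutions are constant on connected components of the level sets of $XY^{-1}X^{\rT}$ and of $XJ_0X^{\rT}$), and the change of variables $(X,Y)\mapsto(X,W)$ with $W=XY^{-1}X^{\rT}$ reduces the full PDE to the second lemma, assembling $f(X(Y^{-1}+J_0)X^{\rT})$; an involutivity lemma supports this via the Frobenius theorem. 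You instead argue group-theoretically: the candidate is invariant under $P\mapsto SPS^{\rT}$ (your computation using $\sigma^{\rT}J_0\sigma=J_0$ is right), so the Lie-derivative computation in the proof of Lemma \ref{lem:VPDE} yields the PDE; conversely, the PDE is exactly infinitesimal invariance, which integrates to constancy along orbits, and your orbit-separation computation ($\sigma^{\rT}:=P_{12}^{-1}P_{12}'$; the antisymmetric part of the invariant forces symplecticity, the symmetric part forces $P_{22}'=\sigma P_{22}\sigma^{\rT}$) shows the fibers of $\phi(P)=(P_{11},\,P_{12}(P_{22}^{-1}+J_0)P_{21})$ are precisely the orbits, so every smooth solution factors through $\phi$. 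Your route makes the invariant-theoretic content transparent and dispenses with the change of variables; the paper's characteristics are in essence your one-parameter orbits, and its route additionally delivers the explicit derivative formulas and the involutivity statement used elsewhere.

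Three points need tightening. First, for part (i) only the easy direction of Lemma \ref{lem:VPDE} is needed: invariance plus differentiability gives the PDE by differentiating at $\veps=0$; no integration and no connectedness of the symplectic group enter there, so the ``right to left'' phrasing conflates the two directions. Second, when integrating the infinitesimal condition along an orbit you should note that the orbit stays inside the given connected component of $\{\det P_{12}\ne 0,\ P_{22}\succ 0\}$ (symplectic $\sigma$ has $\det\sigma=1$, congruence preserves $P_{22}\succ0$, and the orbit is a connected image of the connected symplectic group); this, not Frobenius integrability, is what licenses applying the PDE along the whole curve. Third, a dimension count does not prove that $\phi$ is a submersion: to obtain a Frechet differentiable $f_t$ you should verify the rank of $\d\phi$ directly, e.g.\ perturbing $P_{22}$ alone yields arbitrary symmetric variations of the second invariant (congruence by the invertible $P_{12}P_{22}^{-1}$), while perturbing $P_{12}$ yields antisymmetric variations $2\bA(\delta P_{12}\,J_0P_{12}^{\rT})$, which are arbitrary since $J_0P_{12}^{\rT}$ is invertible. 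With these repairs your proof is complete and at least as rigorous as the paper's own treatment of the regularity of $f_t$.
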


The proof of Theorem~\ref{th:Vshape} is given in Appendix \ref{sec:Vshape_proof} and employs ideas from the method of characteristics for conventional PDEs  \cite{E_1998,V_1971} in combination with a nonlinear change of variables. The theorem shows that, due to the symplectic invariance,   which holds at any time $t$, the minimum cost function $V_t(P)$ depends  on the matrix $P$ only through the special combinations of its blocks $P_{11}$, $P_{12}P_{22}^{-1}P_{21}$,  $P_{12}J_0 P_{21}$ which constitute a maximal set of nonconstant invariants of $P$ with respect to the transformation group $P\mapsto SPS^{\rT}$ described in (\ref{VV}).

\section{Hamilton-Jacobi-Bellman equation}\label{sec:HJBE}

Assuming that the minimum cost  function $V_t(P)$ from (\ref{V}) is continuously differentiable with respect to $t$ and $P$ in the sense of Frechet,  the dynamic programming principle yields the HJBE
\begin{equation}
\label{HJBE}
    \d_t V_t(P)
    +
    \inf_{u \in \mU}
    \Pi_t(P,u,Q_t(P))
    = 0.
\end{equation}
Here, the minimization is over the triple $u:=(b,e,R)$ of the current matrices $b:= b_t$, $e:= e_t$, $R:= R_t$ of the PR controller, and the map $Q_t: \mS_{2n}^+ \to \mS_{2n}$ is associated with $V_t$ by (\ref{Q}). Also,
\begin{equation}
\label{Pi}
    \Pi_t(P,u,Q)
    :=
            \bra
                \cC_t^{\rT}\cC_t,
                P
            \ket
            +
            \bra
                Q,
                \cL_{t,u}(P)
            \ket
\end{equation}
is the \textit{control Hamiltonian}  \cite{SW_1997} of Pontryagin's minimum principle  \cite{PBGM_1962}  applied to the CQLQG problem (\ref{E}) as an optimal control  problem for the dynamical system (\ref{Pdot}) with state $P$ and control $u$. In view of (\ref{cABC}), (\ref{a}), (\ref{c}) and (\ref{Pdot}),  the matrix $R \in \mS_n$, which parameterizes the free Hamiltonian operator of the PR controller, enters the control Hamiltonian $\Pi_t(P,b,e,R,Q)$ only through $\cA_t$ and in an affine fashion. Moreover, by considering (\ref{Pi}) with $Q= Q_t(P)$ as in (\ref{HJBE}), it follows that
\begin{align}
\nonumber
    \Pi_t(P,b,e,R,Q_t(P))
    &=
    \Pi_t(P,b,e,0,Q_t(P))
    +
        2\Bra
            Q_t(P),
            \begin{bmatrix}
                0\\
                I
            \end{bmatrix}
            J_0 R
            \begin{bmatrix}
                0 & I
            \end{bmatrix}
                P
        \Ket        \\
\nonumber
         &=
\Pi_t(P,b,e,0,Q_t(P))   +
    2
        \Bra
            \bH(H_t^{22}(P)),
            J_0R
        \Ket\\
\label{Rdep0}
    &=
    \Pi_t(P,b,e,0,Q_t(P)),
\end{align}
where we have used the notations (\ref{Pblocks})--(\ref{H}) and Lemma~\ref{lem:VPDE}. The $R$-independence of the right-hand side of  (\ref{Rdep0}) reduces the minimization problem in (\ref{HJBE}) to
\begin{equation}
\label{red0}
    \inf_{u \in \mU}
    \Pi_t(P,u,Q_t(P))
    =
    \inf_{b,e}
    \Pi_t(P,b,e,0,Q_t(P)).
\end{equation}
This does not mean, however, that  $R_t=0$ has to be satisfied for the optimal quantum controller.  The optimization problem (\ref{red0}) is solved in Section~\ref{sec:gain}.
We will now show that the map $Q_t$ from (\ref{Q}), evaluated at an optimal trajectory of the system (\ref{Pdot}) and thus describing the costate of this system 
through the Pontryagin equations
\begin{equation}
\label{PQdot}
    \dot{P}_t
    =
    \d_Q \Pi_t,
    \qquad
    \dot{Q}_t
    =
    -\d_P \Pi_t,
\end{equation}
coincides with the observability Gramian of the closed-loop system (\ref{cABC}). Here, $\dot{(\, )}$ is the total time derivative, and the partial Frechet derivatives of (\ref{Pi}) are taken with respect to $Q$, $P$  as independent $\mS_{2n}$-valued variables.

%

\begin{lemma}
\label{lem:QODE}
Suppose the minimum cost  function $V_t(P)$ in (\ref{V}) is twice continuously Frechet differentiable in $t$ and $P$. Also, suppose there exist functions $b_t^{\diam}(P)$ and $e_t^{\diam}(P)$ which are Frechet differentiable  in $P$ and deliver the minimum in (\ref{red0}).
 Then the matrix  $Q_t^{\diam}:= Q_t(P_t^{\diam})$, obtained  by evaluating the map (\ref{Q}) at an optimal trajectory $P_t^{\diam}$ of the system (\ref{Pdot}), satisfies the differential Lyapunov equation
\begin{equation}
\label{Qdot}
    \dot{Q}_t^{\diam}
    =
    -{\cA_t^{\diam}}^{\rT} Q_t^{\diam}
    -
    Q_t^{\diam} \cA_t^{\diam}
    -
    {\cC_t^{\diam}}^{\rT}\cC_t^{\diam},
\end{equation}
where $\cA_t^{\diam}$, $\cC_t^{\diam}$ are the corresponding state-space matrices of the closed-loop system with the optimal CQLQG controller.
\end{lemma}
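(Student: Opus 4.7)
The plan is to differentiate the HJBE (\ref{HJBE}) with respect to the matrix argument $P$ and then evaluate the result along an optimal trajectory $P_t^{\diam}$ of (\ref{Pdot}), where $\dot P_t^{\diam} = \cL_{t,u_t^{\diam}}(P_t^{\diam})$. This is the standard Pontryagin costate-as-gradient-of-value synthesis, adapted to matrix-valued states. The two structural facts that make the argument go through are the envelope theorem (applicable since $(b_t^{\diam}(P), e_t^{\diam}(P))$ attain the infimum in (\ref{red0}) and are Frechet differentiable in $P$) and the self-adjointness of the Hessian $\d_P^2 V_t$ regarded as a bilinear form on $\mS_{2n}$.

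First I would compute $\d_P$ of the reduced minimum in (\ref{red0}). Because $\Pi_t(P,u,Q)$ in (\ref{Pi}) is affine in $P$ for fixed $u,Q$, and the implicit dependence through the minimizers contributes nothing by the first-order conditions, only the explicit $P$-dependence and the dependence through $Q = Q_t(P) = \d_P V_t(P)$ survive. In view of (\ref{Pi}) and (\ref{Pdot}) this yields
$$
    \d_P
    \inf_{u\in \mU}
    \Pi_t(P,u,Q_t(P))
    =
    \cC_t^{\rT}\cC_t + \cA_t^{\rT} Q_t(P) + Q_t(P)\cA_t + \d_P^2 V_t(P)[\cL_{t,u_t^{\diam}(P)}(P)],
$$
where the last term arises from the chain rule applied to $Q_t(P)$, with self-adjointness of $\d_P^2 V_t$ used to convert $\bra \cL_{t,u_t^{\diam}(P)}(P),\, \d_P^2 V_t(P)[\delta P]\ket$ into the pairing of $\d_P^2 V_t(P)[\cL_{t,u_t^{\diam}(P)}(P)]$ against $\delta P$. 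Here $\cA_t$ and $\cC_t$ are built via (\ref{cABC}), (\ref{a}), (\ref{c}) from $u_t^{\diam}(P)$, so they become $\cA_t^{\diam}, \cC_t^{\diam}$ once $P = P_t^{\diam}$.

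Next I would differentiate $Q_t^{\diam} = \d_P V_t(P_t^{\diam})$ in time along the optimal trajectory using the chain rule:
$$
    \dot{Q}_t^{\diam}
    =
    \d_t \d_P V_t(P_t^{\diam})
    +
    \d_P^2 V_t(P_t^{\diam})[\dot P_t^{\diam}],
$$
where $\d_t$ and $\d_P$ commute by the $C^2$-regularity of $V_t$. Applying $\d_P$ to (\ref{HJBE}) shows that the first summand equals minus the right-hand side of the previous display, evaluated at $P_t^{\diam}$. Since $\dot P_t^{\diam} = \cL_{t,u_t^{\diam}}(P_t^{\diam})$, the two occurrences of $\d_P^2 V_t(P_t^{\diam})[\cL_{t,u_t^{\diam}}(P_t^{\diam})]$ cancel, leaving exactly the Lyapunov ODE (\ref{Qdot}). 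The terminal condition $Q_T^{\diam} = 0$ comes from $V_T \equiv 0$ in (\ref{V}); together with (\ref{Qdot}) this identifies $Q_t^{\diam}$ with the observability Gramian of $(\cA_t^{\diam}, \cC_t^{\diam})$ over $[t,T]$.

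The main obstacle is rigorous bookkeeping in the Frechet calculus on $\mS_{2n}$ and $\mU$: one must justify the envelope theorem with matrix-valued parameters, the interchange of $\d_t$ and $\d_P$, and the symmetry of $\d_P^2 V_t$ as a bilinear map $\mS_{2n}\x \mS_{2n} \to \mR$. All of these follow from the standing smoothness hypotheses on $V_t$ and the assumed differentiability of $b_t^{\diam}, e_t^{\diam}$, but care is needed in tracking the off-diagonal blocks of $P,Q$ in the partition (\ref{Pblocks}) and the $1/2$-factors introduced in (\ref{Q}).
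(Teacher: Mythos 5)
Your proposal follows exactly the paper's own argument: differentiate the HJBE in $P$, use the first-order optimality of $(b_t^{\diam},e_t^{\diam})$ (the envelope theorem) to drop the implicit $u$-dependence, invoke interchangeability of $\d_t$ and $\d_P$ and the self-adjointness of $\d_P^2 V_t$, and observe that the Hessian term $\d_P^2 V_t(P_t^{\diam})[\cL_{t,u_t^{\diam}}(P_t^{\diam})]$ is absorbed into the total time derivative of $Q_t^{\diam}$ along the optimal trajectory, leaving the Lyapunov ODE (\ref{Qdot}). The reasoning is correct and is the same route the paper takes.
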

\begin{proof}
The twice continuous differentiability of $V_t(P)$  ensures the interchangeability of its partial derivatives in $t$ and $P$, so that
$\d_P \d_t V_t = \d_t \d_P V_t =\d_t Q_t$ in view of (\ref{Q}). Hence, by substituting (\ref{red0}) into (\ref{HJBE}) and differentiating the HJBE with respect to $P$, it follows that
\begin{equation}
\label{QPDE1}
    \d_t Q_t
    +
        \rd_P
        \Pi_t(P,b_t^{\diam},e_t^{\diam},0,Q_t)/\rd P
        =0.
\end{equation}
Since the pair $(b_t^{\diam},e_t^{\diam})$ minimizes $\Pi_t(P,b,e,Q_t)$ in $(b,e)\in \mR^{n\x m_2}\x \mR^{n\x p_1}$ (that is, over an open set), then it is a critical point of this function, where both $\d_b\Pi_t$ and $\d_e\Pi_t$ vanish. Therefore,
\begin{align}
\nonumber
    \rd\Pi_t(P,b_t^{\diam},e_t^{\diam},0,Q_t)/\rd P
    \!=&
    \d_P\Pi_t + (\d_P b_t^{\diam})^{\dagger}(\d_b\Pi_t)
    + (\d_P e_t^{\diam})^{\dagger}(\d_e\Pi_t) + (\d_P Q_t)^{\dagger}(\d_Q\Pi_t)\\
\label{dPi}
    \!=&
    \d_P \Pi_t+ (\d_P Q_t)^{\dagger}(\d_Q\Pi_t),
\end{align}
where $\d_PQ_t = \d_P^2 V_t$ is a well-defined self-adjoint operator on $\mS_{2n}$ in view of (\ref{Q}) and the twice continuous Frechet differentiability of $V_t$.
Now, from  (\ref{Pdot}) and (\ref{Pi}), it follows  that
\begin{equation}
\label{dPi1}
    \d_P\Pi_t
     =
    \cC_t^{\rT}\cC_t
    +
    (\d_P\cL_{t,u}(P))^{\dagger}(Q)
     =
    \cA_t^{\rT} Q
    +
    Q \cA_t
    +
     \cC_t^{\rT}\cC_t.
\end{equation}
Since (\ref{Pi}) implies that $\d_Q\Pi_t = \cL_{t,u}(P)$, then substitution of (\ref{dPi}),  (\ref{dPi1}) into (\ref{QPDE1}) yields the PDE
\begin{equation}
\label{QPDE}
    \d_t Q_t
    +
    {\cA_t^{\diam}}^{\rT} Q_t
    +
    Q_t \cA_t^{\diam}
    +
    {\cC_t^{\diam}}^{\rT}\cC_t^{\diam}
    +
    (\d_P Q_t)^{\dagger}(\cL_{t,u_t}(P))
    =0.
\end{equation}
For the matrix $P_t^{\diam}$ governed by (\ref{Pdot}) with $u_t =(b_t^{\diam},e_t^{\diam},R_t)$, the matrix
$
    (\d_t Q_t )(P_t^{\diam})
    +
    (\d_P Q_t)^{\dagger}(\cL_{t,u_t}(P_t^{\diam}))
    =
    \rd Q_t(P_t^{\diam})/\rd t
$
is the total time derivative of $Q_t^{\diam}$. Hence, (\ref{QPDE}) leads to (\ref{Qdot}).
\end{proof}
%

The ODE (\ref{Qdot}), whose right-hand side coincides with $-\d_P \Pi_t$ in view of (\ref{dPi1}) and in conformance with (\ref{PQdot}), is the differential Lyapunov equation which governs the observability Gramian $Q_t^{\diam}$ of the closed-loop system under the optimal CQLQG controller. It is solved backwards in time $t\<T$ with zero terminal condition $Q_T^{\diam}= 0$. This (or an alternative reasoning involving the monotonicity of $V_t(P)$ in  $P$), can be used to show that the map $Q_t$ given  by (\ref{Q}), takes values in $\mS_{2n}^+$.
Therefore, $H_t(P)$ in (\ref{H})
is a diagonalizable matrix with all real nonnegative eigenvalues which  correspond to the squared Hankel singular values of the closed-loop system in view of the interpretation of $Q_t$ and $P_t$  as observability and controllability Gramians. We will refer to $H_t$ as the \textit{Hankelian} of the closed-loop system.

\section{Optimal controller gain matrices}\label{sec:gain}

Since the matrix $\cC_t$ in (\ref{cABC})  depends only  on $b_t$ in view of (\ref{c}), the minimization on the right-hand side of (\ref{red0}) can be represented as
\begin{equation}
\label{minbe}
    \inf_u
    \Pi_t(P,u,Q_t)
    =
    \inf_{b}
    (
        \bra
            \cC_t^{\rT}\cC_t,
            P
        \ket
        +
        \inf_{e}
        \Bra
            Q_t,
            \cL_{t,b,e,0}(P)
        \Ket
    )
\end{equation}
which is a repeated minimization problem over the PR controller gain matrices $b:=b_t$ and $e:=e_t$.
Here,
\begin{equation}
\label{Ltbe0}
    \cL_{t,b,e,0}(P)
    =
    \wt{\cA}_t P + P\wt{\cA}_t^{\rT} + \cB_t \cB_t^{\rT}
\end{equation}
is the Lyapunov operator from (\ref{Pdot}) obtained by letting $R_t=0$ in (\ref{a}) and substituting the remaining skew-Hamiltonian part $\wt{a}_t$ of the controller matrix $a_t$ into (\ref{cABC}), which yields
\begin{equation}
\label{tcAt}
    \wt{\cA}_t
    :=
    \begin{bmatrix}
        A_t       & E_t c_t         \\
        e_t C_t   & \wt{a}_t
    \end{bmatrix}.
\end{equation}
In turn, the repeated minimization problem (\ref{minbe}) can be decoupled into two independent problems as follows. In view of the structure of the matrices $\wt{a}_t$ and $c_t$ in (\ref{a}) and (\ref{c}), the matrix $\wt{\cA}_t$ from (\ref{tcAt}) is a quadratic function of the controller gain matrices $b$, $e$. The dependencies of $\wt{\cA}_t$ on $b$ and $e$ can be isolated  as
\begin{equation}
\label{cAtilde}
    \wt{\cA}_t
    =
    \underbrace{
    {\begin{bmatrix}
     A_t & 0\\
        0   & 0
    \end{bmatrix}}}_{\cA_t^0}
    +
    \underbrace{\begin{bmatrix}
        0   &   E_t d_t J_2 b^{\rT} J_0\\
        0   &   b J_2 b^{\rT} J_0/2
    \end{bmatrix}}_{\breve{\cA}_t}
    +
    \underbrace{\begin{bmatrix}
        0        &  0\\
        eC_t     &    eD_t J_1 D_t^{\rT} e^{\rT} J_0/2
    \end{bmatrix}}_{\wh{\cA}_t},
\end{equation}
where the matrix $\cA_t^0$
is independent of both $b$ and $e$, whilst $\breve{\cA}_t$
only depends on $b$ and $\wh{\cA}_t$ only depends on $e$. In a similar vein, (\ref{cABC}) and (\ref{c}) imply that
\begin{align}
\label{BB}
    \cB_t\cB_t^{\rT}
    =&
        \underbrace{\begin{bmatrix}
            B_tB_t^{\rT} + E_td_td_t^{\rT}E_t^{\rT} &0\\
            0 & 0
        \end{bmatrix}}_{\Gamma_t^0}
    +
        \underbrace{\begin{bmatrix}
            0    &  E_t d_t b^{\rT}\\
            b d_t^{\rT}E_t^{\rT} &  bb^{\rT}
        \end{bmatrix}}_{\breve{\Gamma}_t}
    +
        \underbrace{\begin{bmatrix}
            0               & B_t D_t^{\rT} e^{\rT}  \\
            e D_t B_t^{\rT} & e D_t D_t^{\rT} e^{\rT}
        \end{bmatrix}}_{\wh{\Gamma}_t},\\
\label{CC}
    \cC_t^{\rT} \cC_t
    =&
    \underbrace{\begin{bmatrix}
        F_t^{\rT}F_t    & 0\\
        0               & 0
    \end{bmatrix}}_{\Delta_t^0}
    +
    \underbrace{\begin{bmatrix}
        0                               &  F_t^{\rT} G_t d_t J_2 b^{\rT}J_0\\
        J_0bJ_2d_t^{\rT}G_t^{\rT}F_t    &  J_0bJ_2d_t^{\rT}G_t^{\rT}G_t d_t J_2 b^{\rT}J_0
    \end{bmatrix}}_{\breve{\Delta}_t},
\end{align}
where $\Gamma_t^0$, $\Delta_t^0$ are independent of both $b$ and $e$, the matrices  $\breve{\Gamma}_t$, $\breve{\Delta}_t$
only depend on $b$, whilst $\wh{\Gamma}_t$ only depends on $e$.
%
By substituting (\ref{cAtilde}), (\ref{BB}) into (\ref{Ltbe0}) and combining the result  with (\ref{CC}),
the repeated minimization problem in (\ref{minbe}) is indeed split into
\begin{align}
\nonumber
    \inf_u
    \Pi_t(P,u,Q_t)
    =&
    \bra
        \Delta_t^0,P
    \ket
    +
    \bra
        Q_t,
        2 \cA_t^0 P + \Gamma_t^0
    \ket\\
\nonumber
    &+\inf_b
    (
        \bra
            \breve{\Delta}_t,
            P
        \ket
        +
        \bra
            Q_t,
            2 \breve{\cA}_t P + \breve{\Gamma}_t
        \ket
    )\\
\label{minb_mine}
    &+\inf_e
        \bra
            Q_t,
            2 \wh{\cA}_t P + \wh{\Gamma}_t
        \ket
    ).
\end{align}
Both minimization problems on the right-hand side of (\ref{minb_mine}) are quadratic optimization problems whose solutions are available in closed form and lead to the optimal values for the controller gain matrices $b$ and $e$.  The fact that these problems are independent describes a \textit{quasi-separation} property of the gain matrices  \cite{VP_2011} and can be interpreted as a weaker quantum counterpart of the filtering/control separation principle of the classical LQG control.  We will first consider the minimization with respect to  the controller observation gain matrix $e$. From (\ref{cAtilde}) and (\ref{BB}),  it follows that
\begin{align}
\nonumber
        \bra
            Q_t,
            2 \wh{\cA}_t P + \wh{\Gamma}_t
        \ket
        &=
        \Bra
            Q_t,
            {\begin{bmatrix}
                0       &   0\\
                2eC_t     &    eD_t J_1 D_t^{\rT} e^{\rT} J_0
            \end{bmatrix}}
            P
            +
        {\begin{bmatrix}
            0               &  B_t D_t^{\rT} e^{\rT}  \\
            e D_t B_t^{\rT} &  e D_t D_t^{\rT} e^{\rT}
        \end{bmatrix}}
        \Ket \\
\label{mine}
        &=
        \bra
            2(H_t^{21} C_t^{\rT} + Q_t^{21} B_t D_t^{\rT}) + \fM_t(e),
            e
        \ket,
\end{align}
where
\begin{equation}
\label{fM}
    \fM_t
    :=
    \[[[
            H_t^{22}
            J_0,
        D_t J_1 D_t^{\rT}
            \mid
        Q_t^{22},
        D_tD_t^{\rT}
    \]]]
\end{equation}
is a self-adjoint operator of grade two (see Appendix~\ref{sec:class}) on $\mR^{n\x p_1}$. Here, we have used the property that the matrix $H_t^{22}J_0$  is antisymmetric since the controller block $H_t^{22}$ of the Hankelian (\ref{H}) is skew-Hamiltonian in view of (\ref{VPDE}).   If the operator $\fM_t$ is positive definite, then the quadratic function  on the right-hand side of (\ref{mine}) achieves its minimum value
\begin{equation}
\label{emin}
    \min_e
        \bra
            Q_t,
            2 \wh{\cA}_t P + \wh{\Gamma}_t
        \ket
        =
        -
        \|
            H_t^{21} C_t^{\rT} + Q_t^{21} B_t D_t^{\rT}
        \|_{\fM_t^{-1}}^2
\end{equation}
at a unique point
\begin{equation}
\label{ediam}
    e_t^{\diam}
    :=
    -\fM_t^{-1}(H_t^{21} C_t^{\rT} + Q_t^{21} B_t D_t^{\rT}).
\end{equation}
Here, for a positive definite self-adjoint operator $\fO$ on the Hilbert space $\mR^{p\x q}$ with the standard Frobenius inner product $\bra\cdot, \cdot\ket$, we denote by
$$
    \|N\|_{\fO}
    :=
    \sqrt{\bra N, N\ket_{\fO}}
$$
the norm of a matrix $N\in \mR^{p\x q}$ associated with the ``weighted'' Frobenius inner product
$$
    \bra K, N\ket_{\fO}
    :=
    \bra K, \fO(N)\ket.
$$
The minimization in (\ref{minb_mine}) with respect to the controller noise gain matrix $b$ is performed in a similar fashion. It follows from (\ref{cAtilde})--(\ref{CC}) that
\begin{align}
\nonumber
        \bra&
            \breve{\Delta}_t,
            P
        \ket
        +
        \bra
            Q_t,
            2 \breve{\cA}_t P + \breve{\Gamma}_t
        \ket
\\
\nonumber
    =&
    2
    \bra
        P_{21},
        J_0 b J_2 d_t^{\rT} G_t^{\rT} F_t
    \ket
    +
    \bra
        P_{22},
        J_0 bJ_2 d_t^{\rT}G_t^{\rT}G_t d_t J_2 b^{\rT}J_0
    \ket\\
\nonumber
    &+
    \Bra
        Q_t,
        {\begin{bmatrix}
            0   &      2E_t d_t J_2 b^{\rT} J_0\\
            0   &      b J_2 b^{\rT} J_0
        \end{bmatrix}}
        P
        +
        {\begin{bmatrix}
            0    &  E_t d_t b^{\rT}\\
            b d_t^{\rT}E_t^{\rT} &  bb^{\rT}
        \end{bmatrix}}
    \Ket\\
\label{bmin0}
    =&
    \Bra
        2(
            Q_t^{21} E_t d_t
            +
            J_0
            (
                (H_t^{12})^{\rT} E_t
                +
                P_{21} F_t^{\rT} G_t
            )
             d_t J_2
        )
        +
        \fN_t(b),
        b
    \Ket,
\end{align}
where
\begin{equation}
\label{fN}
    \fN_t
     :=
    \[[[
            H_t^{22} J_0,
        J_2
            \mid
        Q_t^{22},
        I
            \mid
        J_0 P_{22} J_0,
        J_2 d_t^{\rT} G_t^{\rT} G_t d_t J_2
    \]]]
\end{equation}
is a self-adjoint operator of grade three (see Appendix~\ref{sec:class}) on $\mR^{n\x m_2}$ in view of the antisymmetry of $H_t^{22}J_0$.
 If $\fN_t$ is positive definite, then the quadratic function of $b$, given by (\ref{bmin0}), achieves its minimum value
\begin{align}
\nonumber
    &\min_b
    (
        \bra
            \breve{\Delta}_t,
            P
        \ket
        +
        \bra
            Q_t,
            2 \breve{\cA}_t P + \breve{\Gamma}_t
        \ket
    )\\
\label{bmin}
    &=
    -
    \|
            Q_t^{21} E_t d_t
            +
            J_0
            (
                (H_t^{12})^{\rT} E_t
                +
                P_{21} F_t^{\rT} G_t
            )
             d_t J_2
    \|_{\fN_t^{-1}}^2
\end{align}
at a unique point
\begin{equation}
\label{bdiam}
    b_t^{\diam}
    :=
    -\fN_t^{-1}(
        Q_t^{21} E_t d_t
            +
            J_0
            (
                (H_t^{12})^{\rT} E_t
                +
                P_{21} F_t^{\rT} G_t
            )
             d_t J_2
             ).
\end{equation}
Finally, by substituting (\ref{emin}), (\ref{bmin}) into (\ref{minb_mine}) and using the representation
$$
    \bra
        \Delta_t^0,P
    \ket
    +
    \bra
        Q_t,
        2 \cA_t^0 P + \Gamma_t^0
    \ket=    \bra
        F_t^{\rT}F_t,
        P_{11}
    \ket
    +
    2
    \bra
        H_t^{11},
        A_t
    \ket
    +
    \bra
        Q_t^{11},
        B_tB_t^{\rT}
        +
        E_td_td_t^{\rT}E_t^{\rT}
    \ket,
$$
which follows from (\ref{cAtilde})--(\ref{CC}),  the HJBE (\ref{HJBE}) is reduced to the Hamilton-Jacobi equation (HJE) below.

\begin{theorem}
\label{th:HJE}
Suppose the minimum cost  function $V_t(P)$ for the CQLQG problem, defined by (\ref{V}), is continuously Frechet differentiable in $t$ and $P$, and the associated self-adjoint   operators $\fM_t$ and $\fN_t$ in (\ref{fM}) and (\ref{fN}) are positive definite. Then the function $V_t(P)$ satisfies the HJE
\begin{align}
\nonumber
    \d_t V_t
    &+
   \bra
        F_t^{\rT}F_t,
        P_{11}
    \ket
    +
    2
    \bra
        H_t^{11},
        A_t
    \ket
    +
    \bra
        Q_t^{11},
        B_tB_t^{\rT}
        +
        E_td_td_t^{\rT}E_t^{\rT}
    \ket\\
\nonumber
    &-
    \|
            Q_t^{21} E_t d_t
            +
            J_0
            (
                (H_t^{12})^{\rT} E_t
                +
                P_{21} F_t^{\rT} G_t
            )
             d_t J_2
    \|_{\fN_t^{-1}}^2\\
\nonumber
        &-
        \|
            H_t^{21} C_t^{\rT} + Q_t^{21} B_t D_t^{\rT}
        \|_{\fM_t^{-1}}^2
        =0,
\end{align}
and the  gain matrices $b_t^{\diam}$, $e_t^{\diam}$  of an optimal PR controller are computed according to (\ref{bdiam}) and (\ref{ediam}).
\end{theorem}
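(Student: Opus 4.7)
The plan is to derive the HJE by specializing the HJBE \eqref{HJBE} through the minimization \eqref{red0} (which already uses symplectic invariance to discard $R$) and carrying out the two decoupled quadratic optimizations in \eqref{minb_mine} in closed form. Since the decomposition \eqref{minb_mine} of the control Hamiltonian into a part independent of $(b,e)$, a part depending only on $b$, and a part depending only on $e$ has already been established via \eqref{cAtilde}--\eqref{CC}, the remaining work is to solve each of the two inner minimizations and substitute the results back into \eqref{HJBE}.

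First I would handle the $e$-minimization. The strategy is to expand the pairing $\bra Q_t, 2\wh{\cA}_t P + \wh{\Gamma}_t\ket$ using \eqref{cAtilde}--\eqref{BB} and, by the cyclic property of the trace together with the partitioning \eqref{Pblocks}--\eqref{H}, rewrite it as an affine-quadratic form in $e$. Collecting linear terms yields the factor $H_t^{21}C_t^{\rT}+Q_t^{21}B_tD_t^{\rT}$, while the quadratic part assembles into $\bra e, \fM_t(e)\ket$ with $\fM_t$ as in \eqref{fM}; crucially, the antisymmetry of $H_t^{22}J_0$ (guaranteed by Lemma \ref{lem:VPDE}, i.e.\ skew-Hamiltonicity of $H_t^{22}$) is what makes $\fM_t$ self-adjoint of the grade-two form defined in Appendix \ref{sec:class}. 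Under the assumed positive-definiteness of $\fM_t$, standard quadratic optimization on the Hilbert space $\mR^{n\x p_1}$ gives the unique minimizer \eqref{ediam} and the minimum value \eqref{emin}.

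Next I would carry out the $b$-minimization, which is the more delicate step since $b$ appears simultaneously in $\breve{\cA}_t$, $\breve{\Gamma}_t$ and $\breve{\Delta}_t$. The plan is to expand $\bra \breve{\Delta}_t,P\ket + \bra Q_t, 2\breve{\cA}_t P+\breve{\Gamma}_t\ket$, transport all factors onto the $b$-slot via the Frobenius inner product, and recognize the linear term as $Q_t^{21}E_td_t+J_0((H_t^{12})^{\rT}E_t+P_{21}F_t^{\rT}G_t)d_tJ_2$. The quadratic term collects into $\bra b,\fN_t(b)\ket$, with $\fN_t$ in \eqref{fN} being self-adjoint of grade three thanks again to the antisymmetry of $H_t^{22}J_0$ (the third summand in $\fN_t$ arises from the $\breve{\Delta}_t$ contribution, and the first two from the $\cL$-contribution). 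Assuming $\fN_t\succ 0$, the unique minimizer is \eqref{bdiam} with value \eqref{bmin}.

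Finally, substituting \eqref{emin} and \eqref{bmin} into \eqref{minb_mine}, together with the identity for the constant part $\bra\Delta_t^0,P\ket+\bra Q_t,2\cA_t^0 P+\Gamma_t^0\ket$ displayed just before the theorem (immediate from \eqref{cAtilde}--\eqref{CC} and block-structure), and plugging the resulting expression for $\inf_u \Pi_t(P,u,Q_t)$ into \eqref{HJBE}, yields the stated HJE. The principal obstacle I expect is the tracking of signs and transpositions when moving $J_0$, $J_2$ across factors in the $b$-optimization so that the linear part packages into the form advertised in \eqref{bdiam} and the quadratic part matches the grade-three operator \eqref{fN}; everything else is routine, since the positive-definiteness hypotheses on $\fM_t$ and $\fN_t$ make the two inner optimizations textbook applications of quadratic minimization in a Hilbert space.
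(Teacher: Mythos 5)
Your proposal is correct and follows essentially the same route as the paper: it reduces the HJBE via the symplectic-invariance argument eliminating $R$, splits the control Hamiltonian into the $(b,e)$-independent, $b$-only and $e$-only parts via (\ref{cAtilde})--(\ref{CC}), solves the two decoupled quadratic minimizations with the grade-two and grade-three operators $\fM_t$, $\fN_t$ (whose structure rests on the skew-Hamiltonicity of $H_t^{22}$ from Lemma~\ref{lem:VPDE}), and substitutes the minima (\ref{emin}), (\ref{bmin}) back into (\ref{HJBE}). This is exactly the derivation the paper carries out in Section~\ref{sec:gain}, so no gap to report.
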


By using \cite[Lemma~5]{VP_2011}, it can be shown, that if the controller block $Q_t^{22}$ of the closed-loop system observability Gramian is nonsingular, the matrix $D_t$ is of full row rank and $\br((Q_t^{22})^{-1}H_t^{22}J_0)<1$, with $\br(\cdot)$ the spectral radius of a matrix, then both operators $\fM_t$ and $\fN_t$
are positive definite. Since each of the matrices $Q_t(P)$ and  $P$ enters $\fM_t$ and $\fN_t$ in (\ref{fM}) and (\ref{fN}) in a linear fashion, the dependence of $b_t^{\diam}$ and $e_t^{\diam}$ on these  matrices is linear-fractional and hence, smooth, provided $\fM_t\succ 0$ and $\fN_t\succ 0$ (such values of $P$ form an open set).  Therefore, if, in addition to the assumptions of Theorem~\ref{th:HJE},  the minimum  cost function $V_t(P)$ is \textit{twice}  continuously Frechet differentiable with respect to $P$, then the optimal CQLQG controller gain matrices $b_t^{\diam}$ and $e_t^{\diam}$ are continuously Frechet differentiable functions of $P$. If, furthermore, $V_t$ is twice continuously Frechet differentiable in $t$ and $P$, this ensures the applicability of Lemma~\ref{lem:QODE}, which utilizes the viewpoint of Pontryagin's minimum principle on the CQLQG problem.

\section{Equations for the optimal quantum controller}\label{sec:eqsummary}

The set of equations for the optimal CQLQG controller  over the time interval $0\< t \< T$   consists of two  Lyapunov ODEs (\ref{Pdot}) and (\ref{Qdot}) for the controllability and observability Gramians $P_t$, $Q_t$ of the closed-loop system:
\begin{align}
\label{Pdot1}
        \dot{P}_t
    =&
      \cA_t     P_t
    +
     P_t   \cA_t    ^{\rT}
    +
      \cB_t       \cB_t    ^{\rT},\\
\label{Qdot1}
        \dot{Q}_t
    =&
      -\cA_t^{\rT}     Q_t
    -
     Q_t   \cA_t
    -
      \cC_t^{\rT}       \cC_t,
\end{align}
with the split boundary conditions $P_0= P$ and $Q_T=0$, where $P\in\mS_{2n}^+$ is a given matrix satisfying $P+ i\Theta_0/2\succcurlyeq 0$. According to (\ref{cABC}), (\ref{a}), (\ref{c}), the closed-loop system matrices $\cA_t$, $\cB_t$, $\cC_t$ are expressed in terms of the controller matrices $b_t$, $e_t$, $R_t$ as
\begin{align}
\label{cAdiam}
    \cA_t
    :=&
    {
        \begin{bmatrix}
                  A_t       & & E_t     d_t    J_2    b_t^{\rT}    J_0\\
                  e_t C_t   & & (e_t D_t J_1 D_t^{\rT}e_t^{\rT} + b_t J_2 b_t^{\rT})J_0/2 + J_0R_t\\
        \end{bmatrix}
    },\\
\label{cBdiam}
    \cB_t
    :=&
    {
        \begin{bmatrix}
            B_t       & & E_td_t\\
            e_t D_t   & & b_t
        \end{bmatrix}
    },\\
\label{cCdiam}
    \cC_t
    :=&
    {
        \begin{bmatrix}
F_t & & G_t     d_t
    J_2
    b_t^{\rT}
    J_0        \end{bmatrix}
    }.
\end{align}
In turn, the optimal controller gain matrices $b_t$, $e_t$ are completely specified by the Gramians $P_t$, $Q_t$ (which determine the closed-loop system Hankelian $H_t$) according to (\ref{ediam}), (\ref{bdiam}) as
\begin{align}
\label{ediam1}
    e_t
    :=&
    -
    \[[[
            H_t^{22}
            J_0,
        D_t J_1 D_t^{\rT}
            \mid
        Q_t^{22},
        D_tD_t^{\rT}
    \]]]^{-1}
    (H_t^{21} C_t^{\rT} + Q_t^{21} B_t D_t^{\rT}),\\
\nonumber
    b_t
    :=&
    -    \[[[
            H_t^{22} J_0,
        J_2
            \mid
        Q_t^{22},
        I
            \mid
        J_0 P_{22} J_0,
        J_2 d_t^{\rT} G_t^{\rT} G_t d_t J_2
    \]]]^{-1}\\
\label{bdiam1}
    &(
        Q_t^{21} E_t d_t
            +
            J_0
            (
                (H_t^{12})^{\rT} E_t
                +
                P_{21} F_t^{\rT} G_t
            )
             d_t J_2
             ),
\end{align}
where the inverses of the special self-adjoint operators can be  represented through the vectorization of matrices; see Appendix~\ref{sec:class}.  Therefore, the set of equations for the optimal CQLQG controller is a split boundary value problem for two Lyapunov ODEs (\ref{Pdot1}), (\ref{Qdot1}) which are nonlinearly coupled through the algebraic equations (\ref{cAdiam})--(\ref{bdiam1}). The matrix $R_t$, which affinely enters the right-hand side of these ODEs through the matrix $\cA_t$ in (\ref{cAdiam}), appears to be a free parameter in the sense that an equation for its optimal value is missing and the optimal controller gain matrices $e_t$ and $b_t$ in (\ref{ediam1}) and (\ref{bdiam1}) do not depend on the current value of $R_t$.
Moreover, by using the identity
\begin{align*}
\nonumber
    \dot{H}_t
    =&
    \dot{Q}_t P_t + Q_t \dot{P}_t\\
\nonumber
    =&
    -(\cA_t^{\rT} Q_t+Q_t\cA_t + \cC_t^{\rT}\cC_t) P_t
    +
    Q_t(\cA_t P_t+P_t\cA_t^{\rT} + \cB_t\cB_t^{\rT})\\
    =&
    [H_t, \cA_t^{\rT}] + Q_t \cB_t \cB_t^{\rT} - \cC_t^{\rT} \cC_t P_t
\end{align*}
(see  also \cite[Appendix C]{VP_2010b}), it can be shown that the skew-Hamiltonian structure of $H_t^{22}$ in (\ref{VPDE}), trivially ensured at $T=0$ by the terminal condition $Q_T=0$,  is preserved by  the dynamics (\ref{Pdot1})--(\ref{bdiam1})  for $t < T$ regardless of the choice of $R_t$. However, the function $[0,T]\ni t\mapsto R_t\in \mS_n$ is responsible for the fulfillment of the split boundary conditions.

\section{Concluding remarks}\label{sec:conclusion}

We have considered a time-varying Coherent Quantum LQG control problem which seeks a physically realizable quantum controller to minimize the finite-horizon LQG cost, and outlined a novel approach towards its solution. Using the Hamiltonian parameterization of PR controllers, which relates them to open quantum harmonic oscillators,  we have recast    the CQLQG problem
as a covariance control problem.  Dynamic programming and Pontryagin's minimum principle have been applied to the resulting optimal control problem for a subsidiary  deterministic dynamical system whose state is the symmetric  part of the quantum covariance matrix of the closed-loop system state vector governed by a differential Lyapunov equation. It has been shown that the corresponding costate is the observability Gramian of the closed-loop system. By using the invariance of the minimum cost function under the group of symplectic similarity transformations of PR controllers, we have derived algebraic equations for the gain matrices of the optimal CQLQG controller and established their partial decoupling as a weaker quantum analogue of the classical LQG control/filtering  separation principle. These equations express the optimal controller gain matrices in terms of the current observability and controllability Gramians of the closed-loop system thus leading to a split boundary value problem for two nonlinearly coupled differential Lyapunov equations. The difficulty of solving this problem lies in the coupling of the differential equations and mixed nature of the boundary conditions. However, the  special structure of the minimum cost function, enforced  by the symplectic invariance,   suggests the possibility of reducing the order of these equations by nonlinear transformation of the blocks of the Gramians. Another resource yet to be explored  is to consider the CQLQG problem for PR plants.
The existence/uniqueness of solutions to the equations for the state-space realization matrices of the optimal CQLQG controller remains an open problem and so do their possible reduction and numerical implementation. These issues are a subject of current research and will be reported in subsequent publications.






\appendix

\section{Proof of Theorem \ref{th:Vshape}}\label{sec:Vshape_proof}

By omitting the dependence of the minimum cost function $V_t(P)$ on $t$ and $P_{11}$ which are assumed to be fixed, and introducing the variables
\begin{equation}
\label{XY}
    X
    :=
    P_{12} = P_{21}^{\rT},
    \qquad
    Y
    :=
    P_{22},
\end{equation}
the PDE (\ref{VPDE}) takes the form
\begin{equation}
\label{MVJ}
    \bH(
        M(V)
    )
    = 0.
 \end{equation}
Here, use is made of (\ref{Pblocks})--(\ref{bS}), and $M$ is a linear differential operator which maps a Frechet differentiable function $\mR^{n\x n}\x \mS_n \ni (X,Y) \mapsto v(X,Y)\in \mR$ to an $\mR^{n\x n}$-valued function $M(v)$ defined on the same domain  by
\begin{equation}
\label{M}
    M(v)
    :=
    \frac{1}{2}
    X^{\rT}\d_X v
    +
    Y\d_Y v.
\end{equation}
The next section verifies the involutivity of the PDE (\ref{MVJ}) as a system of scalar PDEs. Then we consider two particular solutions of this PDE in Section~\ref{sec:two} which allow its  general solution to be obtained in Section~\ref{sec:gen} through a change-of-variables technique.

\subsection{Verification of involutivity}\label{sec:Frobenius}

We will now verify the fulfillment of the local complete integrability conditions for the PDE (\ref{MVJ}),  which in view of (\ref{bH}) and (\ref{bS}) is equivalent  to
\begin{equation}
\label{SMVJ}
    \bS(M(V)J_0) = 0
\end{equation}
whose left-hand side is a real symmetric matrix of order $n$. If the complete integrability holds, then the PDE has a $n^2$-dimensional integral manifold which can be represented using an $\mR^{n\x n}$-valued function of the matrices $X$ and $Y$. For any constant matrix $Z \in \mS_n$, let $L_Z$ be a linear operator which maps a Frechet differentiable function $\mR^{n\x n}\x \mS_n\ni (X,Y)\mapsto v(X,Y)\in \mR$ to a function $L_Z(v): \mR^{n\x n}\x \mS_n\to \mR$ defined by
\begin{equation}
\label{LZ}
    L_Z(v)
     =
    \bra
        Z,
        \bS(M(v)J_0)
    \ket
     =
    -
    \bra
        Z J_0,
        M(v)
    \ket
    =
    -
    \Lambda_{ZJ_0}(v).
\end{equation}
Here, $M$ is given by (\ref{M}),
and $\Lambda_N$ is a linear differential operator, which is associated  with a matrix $ N \in \mR^{n\x n}$ and maps the function $v$ to another function of $X$, $Y$ as
\begin{equation}
\label{Lambda}
    \Lambda_N(v)
    :=
    \bra
        N,
        M(v)
    \ket.
\end{equation}
Thus, the operators $L_Z$, associated with  symmetric matrices $Z$ by (\ref{LZ}),  are the operators $\Lambda_N$ considered for Hamiltonian matrices $N$, although, in general,  the matrix $N$ in (\ref{Lambda}) can be arbitrary.
By the Frobenius integration theorem \cite{G_1977} (see also, \cite[pp. 158--165]{B_1986}),
the local complete integrability of the PDE (\ref{SMVJ}) will be proved if we show that for any constant matrices $Z_1, Z_2 \in \mS_n$, there exists a matrix $Z_3\in \mS_n$, which is allowed to depend on $X$ and $Y$ and such that
\begin{equation}
\label{LLL}
    [
        L_{Z_1}, L_{Z_2}
    ](v)
    =
    L_{Z_3}(v)
\end{equation}
is satisfied for any twice continuously  Frechet  differentiable function $v$ described above. The relation (\ref{LLL}) is an inner product form of the involutivity condition for the PDE (\ref{SMVJ}) regarded as a system of scalar PDEs, with the inner product used to represent linear combinations of the individual equations in a coordinate-free fashion.
\begin{lemma}
\label{lem:Frob}
For any constant matrices $N_1, N_2\in \mR^{n\x n}$, the operators (\ref{Lambda}), considered on  twice continuously Frechet  differentiable test functions $v$,  satisfy the commutation relation
\begin{equation}
\label{LambdaLambda}
    [
        \Lambda_{N_1},
        \Lambda_{N_2}
    ]
    =
    \Lambda_{[N_1,N_2]/2}.
\end{equation}
\end{lemma}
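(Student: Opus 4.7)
The plan is to realize each $\Lambda_N$ as the directional derivative along a linear vector field $\xi_N$ on $\mR^{n\x n}\x\mS_n$, and then to identify (\ref{LambdaLambda}) with the Lie-bracket identity $[\xi_{N_1},\xi_{N_2}] = \xi_{[N_1,N_2]/2}$.

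First I would rewrite (\ref{Lambda}) by using the cyclicity of the trace implicit in $\bra\cdot,\cdot\ket$ to move $X$ and $Y$ off the gradients:
$$
\Lambda_N(v) = \tfrac{1}{2}\bra XN,\d_X v\ket + \bra \bS(N^{\rT} Y),\d_Y v\ket.
$$
The substitution of $YN$ by its symmetrization $\bS(N^{\rT} Y) = (N^{\rT} Y + YN)/2$ in the second term is legitimate because the gradient $\d_Y v\in \mS_n$ is symmetric whereas $YN - N^{\rT} Y$ is antisymmetric, so their Frobenius inner product vanishes. This exhibits $\Lambda_N$ as the directional derivative along the linear vector field
$$
\xi_N(X,Y) := \bigl(XN/2,\ \bS(N^{\rT} Y)\bigr)
$$
on $\mR^{n\x n}\x\mS_n$.

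Because each $\Lambda_{N_i}$ is a derivation on the algebra of twice continuously Frechet differentiable functions, the second-order partial derivatives of $v$ cancel in $\Lambda_{N_1}(\Lambda_{N_2}(v)) - \Lambda_{N_2}(\Lambda_{N_1}(v))$, and the standard vector-field identity gives $[\Lambda_{N_1},\Lambda_{N_2}] = D_{[\xi_{N_1},\xi_{N_2}]}$. Since the components of $\xi_N$ are linear in $(X,Y)$, the Lie bracket reduces to routine matrix manipulation: the $X$-component comes out immediately as $X[N_1,N_2]/4$, while for the $Y$-component the mixed terms $N_j^{\rT} Y N_k$ cancel and the pure quadratic contributions combine through $N_2^{\rT} N_1^{\rT} - N_1^{\rT} N_2^{\rT} = [N_1,N_2]^{\rT}$ to produce $\bS([N_1,N_2]^{\rT} Y)/2$. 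These match the components of $\xi_{[N_1,N_2]/2}$, yielding the claim.

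The main obstacle is the bookkeeping in the first step: the symmetry constraint on $Y$ must be handled carefully, since the naive choice $(XN/2,\, YN)$ is not tangent to $\mR^{n\x n}\x\mS_n$ and, without the symmetrization, the resulting vector-field family would fail to close under the Lie bracket with the factor $1/2$ appearing in (\ref{LambdaLambda}). As an independent sanity check, the flow of $\xi_N$ is the one-parameter group $(X,Y)\mapsto (X\re^{tN/2},\, \re^{tN^{\rT}/2} Y\, \re^{tN/2})$, whose infinitesimal commutators via the Baker-Campbell-Hausdorff expansion produce exactly the factor $1/2$ predicted by (\ref{LambdaLambda}).
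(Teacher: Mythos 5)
Your proposal is correct, and it reaches (\ref{LambdaLambda}) by a route that is organized differently from the paper's. The paper argues by direct computation: it writes out $\d_X\Lambda_N(v)$ and $\d_Y\Lambda_N(v)$, forms the composition $\Lambda_{N_1}(\Lambda_{N_2}(v))$ with all second-order Frechet derivative terms displayed, and observes that these terms (together with $N_2^{\rT}YN_1+N_1^{\rT}YN_2$) are invariant under swapping $N_1$ and $N_2$, hence cancel in the commutator, leaving exactly $\Lambda_{[N_1,N_2]/2}(v)$. You instead realize $\Lambda_N$ as the directional derivative along the linear vector field $\xi_N(X,Y)=(XN/2,\,\bS(N^{\rT}Y))$, where the symmetrization is the key step that makes the field tangent to $\mR^{n\x n}\x\mS_n$ and is legitimate precisely because $\d_Y v\in\mS_n$; you then invoke the standard commutator identity for vector fields acting on twice continuously differentiable functions and reduce everything to the finite-dimensional bracket $[\xi_{N_1},\xi_{N_2}]=\xi_{[N_1,N_2]/2}$, whose components I verified come out as $X[N_1,N_2]/4$ and $\bS([N_1,N_2]^{\rT}Y)/2$, matching $\xi_{[N_1,N_2]/2}$. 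The underlying cancellation mechanism is the same in both arguments (symmetry of the second Frechet derivative), but your packaging removes the second-derivative bookkeeping from the matrix algebra, makes explicit where the $C^2$ hypothesis enters, and, via the flow $(X,Y)\mapsto(X\re^{tN/2},\,\re^{tN^{\rT}/2}Y\re^{tN/2})$, gives a conceptual explanation of the factor $1/2$; the paper's computation is more elementary and self-contained, needing no appeal to the vector-field formalism. One small point of wording: the cancellation of the second-order terms is justified by the symmetry of the Hessian (equivalently, by the vector-field identity you cite), not by the statement that each $\Lambda_{N_i}$ is a derivation, so that sentence should be rephrased.
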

\begin{proof}
For any $N\in \mR^{n \x n}$ and any twice continuously Frechet  differentiable function $v:\mR^{n \x n}\x \mS_n \mapsto \mR$, the function $\Lambda_N(v)$ is continuously Frechet differentiable and its derivatives are
computed as
\begin{align*}
    \d_X
    \Lambda_N(v)
     =&
    \frac{1}{2}
    (
        \d_X v N^{\rT}
        +
        \d_X^2 v(XN)
    ) +
        \d_Y \d_X v(YN),\quad\\
\nonumber
    \d_Y
    \Lambda_N(v)
     =&
    \frac{1}{2}
    (
        N \d_Y v + \d_Y vN^{\rT}
        +
        \d_Y^2 v(YN+N^{\rT}Y)
        +
        \d_X\d_Y v(XN)
    ),
\end{align*}
where the relation $\d_X\d_Y v = (\d_Y\d_X v)^{\dagger}$ and self-adjointness of the linear operators $\d_X^2 v$ and $\d_Y^2 v$ are used.
Hence, the composition of the differential operators (\ref{Lambda}), associated with $N_1, N_2 \in \mR^{n\x n}$, is computed as
\begin{align}
\nonumber
    \Lambda_{N_1} (\Lambda_{N_2}(v))
     =&
    \frac{1}{2}
    \bra
        X N_1,
        \d_X
        \Lambda_{N_2}(v)
    \ket +
    \bra
        YN_1,
        \d_Y
        \Lambda_{N_2}(v)
    \ket\\
\nonumber
     =&
    \frac{1}{4}
    \bra
        X N_1 N_2, \d_X v
    \ket+
    \frac{1}{4}
    \bra
        N_2^{\rT} Y N_1 + N_1^{\rT} Y N_2
        + YN_1N_2 + N_2^{\rT}N_1^{\rT}Y,
        \d_Y v
    \ket\\
\nonumber
    &+
    \frac{1}{2}
    \bra
        XN_1,
        \d_X^2 v(XN_2)
    \ket
    +
    \frac{1}{2}
    \bra
        XN_1,
        \d_Y\d_X v(YN_2)
    \ket\\
\label{LamLam}
    &+
    \frac{1}{2}
    \bra
        YN_1,
        \d_X\d_Y v(XN_2)
    \ket+
    \frac{1}{4}
    \bra
        YN_1 + N_1^{\rT}Y,
        \d_Y^2 v(YN_2+N_2^{\rT}Y)
    \ket.
\end{align}
Note that the part of the right-hand side of (\ref{LamLam}), which involves the second-order derivatives  of $v$, is invariant under the transposition $(N_1,N_2)\mapsto (N_2,N_1)$, and so also is the matrix $N_2^{\rT} Y N_1 + N_1^{\rT} Y N_2$. Therefore, the commutator of $\Lambda_{N_1}$ and $\Lambda_{N_2}$ takes the form
\begin{eqnarray}
\nonumber
    [\Lambda_{N_1},\Lambda_{N_2}](v)
    & = &
    \frac{1}{4}
    \bra
        X [N_1, N_2],
        \d_X v
    \ket
    +
    \frac{1}{2}
    \bra
        Y [N_1, N_2],
        \d_Y v
    \ket\\
\nonumber
    &=&
    \frac{1}{2}
    \bra
        [N_1, N_2],
        M(v)
    \ket
    =
    \Lambda_{[N_1,N_2]/2}(v),
\end{eqnarray}
which holds for twice continuously Frechet differentiable functions $v$, thus establishing (\ref{LambdaLambda}).
\end{proof}

In view of the Frobenius theorem mentioned above, Lemma~\ref{lem:Frob} implies the local complete integrability of the PDE
\begin{equation}
\label{MV0}
        M(V)
        =0,
\end{equation}
whose left-hand side is given by (\ref{M}). The solutions of (\ref{MV0}) are also solutions of the PDE (\ref{MVJ}), but not visa versa. Since (\ref{MV0}) is a system of $n^2$ independent scalar PDEs, the integrability suggests that it has a  $n(n+1)/2$-dimensional integral manifold which can be represented using an $\mS_n$-valued map. Moreover, Lemma~\ref{lem:Frob}  also establishes the involutivity for the PDE (\ref{MVJ}), with its $n^2$-dimensional integral manifold representable by an $\mR^{n\x n}$-valued map, or a pair of maps with values in $\mS_n$ and $\mA_n$, where $\mA_n:= \mS_n^{\bot}$ is the subspace of real antisymmetric matrices of order $n$, which is the orthogonal complement of the subspace $\mS_n$ in the sense of the Frobenius inner product in $\mR^{n\x n}$. Indeed,
$$
    [Z_1J_0, Z_2J_0]
    =
    (Z_1J_0Z_2-Z_2J_0Z_1)J_0
$$
for any $Z_1, Z_2 \in \mS_n$, in accordance with the fact that the commutator of Hamiltonian matrices is also a Hamiltonian matrix.   Therefore, by applying (\ref{LambdaLambda}) to the operators (\ref{LZ}), it follows that the involutivity condition (\ref{LLL}) holds with $Z_3 := (Z_2J_0Z_1-Z_1J_0Z_2)/2\in \mS_n$, which implies the local complete integrability for the PDE (\ref{MVJ}).

\subsection{Two particular solutions}\label{sec:two}


\begin{lemma}
\label{lem:partsol1}
Suppose $f: \mS_n \to \mR$ is a Frechet differentiable function. Then the function
\begin{equation}
\label{VW}
    V(X,Y)
    :=
    f(XY^{-1} X^{\rT})
\end{equation}
defined for $X \in \mR^{n\x n}$ and $Y \in \mS_n$ with $\det Y \ne 0$, satisfies the PDE (\ref{MV0}).
Moreover, (\ref{VW}) describes the general smooth solution of (\ref{MV0}) on every connected component of the set $\{(X,Y)\in \mR^{n\x n}\x \mS_n:\, \det(XY)\ne 0\}$.
\end{lemma}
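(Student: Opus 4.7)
The plan is to split the proof into two parts: a direct verification that $V(X,Y)=f(XY^{-1}X^{\rT})$ solves (\ref{MV0}), and a separate argument that on a connected component of $\{\det(XY)\ne 0\}$ every smooth solution has this form.

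For the verification, I would set $W:=XY^{-1}X^{\rT}\in \mS_n$ and compute $\d_X V$ and $\d_Y V$ by the Frechet chain rule, exploiting the symmetry of $Y$ and of $f'(W)\in \mS_n$. The $X$-variation $\rd W = \rd X\cdot Y^{-1}X^{\rT} + XY^{-1}\rd X^{\rT}$ produces two equal contributions by cyclicity and transposition identities for the Frobenius inner product, giving $\d_X V = 2 f'(W)X Y^{-1}$. The $Y$-variation $\rd W = -XY^{-1}\rd Y\,Y^{-1}X^{\rT}$ yields $\d_Y V = -Y^{-1}X^{\rT}f'(W)XY^{-1}$, which is manifestly symmetric as needed for a derivative with respect to $Y\in \mS_n$. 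Substituting into $M(V) = \frac{1}{2}X^{\rT}\d_X V + Y\d_Y V$ produces the exact cancellation $X^{\rT}f'(W)XY^{-1} - X^{\rT}f'(W)XY^{-1} = 0$.

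For the converse, my preferred route is a group action argument. Let $G:=GL(n,\mR)$ act on $(X,Y)$ by $g\cdot(X,Y) := (Xg^{\rT},\, gYg^{\rT})$. The action preserves $W$, since $(Xg^{\rT})(gYg^{\rT})^{-1}(gX^{\rT}) = XY^{-1}X^{\rT}$. On each connected component of $\Omega:=\{\det(XY)\ne 0\}$ it is transitive on level sets of $W$: given two pairs $(X_i,Y_i)$ with a common $W$ on the same component, the choice $g := X_2^{\rT}X_1^{-\rT}$ satisfies $X_1 g^{\rT} = X_2$ and $gY_1 g^{\rT} = X_2^{\rT}(X_1Y_1^{-1}X_1^{\rT})^{-1}X_2 = X_2^{\rT}W^{-1}X_2 = Y_2$, and sign considerations on $\det X$ keep $g$ inside the identity component so the orbit stays in the same component. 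A short index computation shows that the infinitesimal generator of this action at $N\in\mR^{n\x n}$ coincides (up to transposing $N$ and a factor of two) with the operator $\Lambda_N(v) = \bra N, M(v)\ket$ from (\ref{Lambda}): one uses the symmetry of $Y$ and of $\d_Y v$ to rewrite $\bra \d_Y v,\, NY+YN^{\rT}\ket$ as $2\bra YN^{\rT}, \d_Y v\ket$, and similarly for the $X$-part. Consequently $M(V)=0$ says exactly that $V$ is constant on $G$-orbits, i.e., on level sets of $W$, so $V$ is a smooth function of $W$ on each component.

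The main obstacle is the bookkeeping in this last identification of the group-action vector fields with the distribution spanned by $\{\Lambda_N\}_{N\in\mR^{n\x n}}$; once this is settled, the involutivity already established in Lemma~\ref{lem:Frob} guarantees that $W$ supplies the full $n(n+1)/2$-dimensional set of independent local invariants, ruling out any extra invariant beyond functions of $W$. An alternative but more computational route would be to change variables from $(X,Y)$ to $(X,W)$ on $\Omega$, a smooth diffeomorphism via $Y = X^{\rT}W^{-1}X$, and verify directly that (\ref{MV0}) pushes forward to $\d_X \widetilde V = 0$ in the new coordinates; this, however, demands a more tedious chain-rule manipulation to show the $W$-derivative contributions cancel.
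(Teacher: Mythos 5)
Your proposal is correct. The verification half is essentially identical to the paper's: both compute $\d_X V = 2f'\,XY^{-1}$ and $\d_Y V = -Y^{-1}X^{\rT}f'XY^{-1}$ and observe the cancellation in $M(V)$. The converse half, however, takes a genuinely different route. The paper argues via the method of characteristics: it takes an arbitrary smooth curve inside a level set $W^{-1}(S)$ from (\ref{invW}), uses the constraint to express $\dot Y$ through $\dot X$, and shows that the PDE forces $\dot V = 0$, so $V$ is constant on connected components of level sets. You instead exhibit the level sets as orbits of the $GL(n,\mR)$-action $g\cdot(X,Y)=(Xg^{\rT},gYg^{\rT})$, check (correctly) that its infinitesimal generators are, up to $N\mapsto N^{\rT}$ and a factor $2$, the operators $\Lambda_N$ of (\ref{Lambda}), so that $M(V)=0$ is exactly infinitesimal invariance, and then use the explicit element $g=X_2^{\rT}X_1^{-\rT}$ (with $\det g>0$ because the sign of $\det X$ is constant on a component of $\{\det(XY)\ne0\}$) to get transitivity on level sets within a component. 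This buys a slightly cleaner passage from ``constant along curves'' to ``depends only on $W$ on each component'': your orbit path shows the intersection of a level set with a component is itself connected, a point the paper leaves implicit; conversely, the paper's curve argument avoids any Lie-group machinery and is self-contained at the level of Frechet calculus. Two minor remarks: the appeal to the involutivity of Lemma~\ref{lem:Frob} to ``rule out extra invariants'' is unnecessary (and, as stated, only heuristic) since your transitivity argument already does that job; and, like the paper, you leave implicit the recovery of a Frechet differentiable $f$ from the orbit-invariant $V$ (e.g.\ via a smooth section of the map $(X,Y)\mapsto XY^{-1}X^{\rT}$ over the component), which is at the same level of rigor as the original proof and so not a gap relative to it.
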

\begin{proof}
With the matrices $X\in \mR^{n\x n}$ and $Y\in \mS_n$,  where $\det Y\ne 0$, we associate the matrices
\begin{equation}
\label{UW}
    U
    :=
    XY^{-1},
    \quad
    W
    :=
    XY^{-1}X^{\rT}.
\end{equation}
The Frechet derivatives of $W$ with respect to $X$ and  $Y$ are expressed in terms of special linear operators of grade one (see Appendix~\ref{sec:class}) and the matrix transpose operator $\cT$ as
\begin{equation}
\label{dW}
    \d_X
    W
    =
    \[[[
        I,
        U^{\rT}
    \]]]
    +
    \[[[
        U,
        I
    \]]]
    \cT,
    \qquad
    \d_Y
    W
    =
    -
    \[[[
        U,
        U^{\rT}
    \]]],
\end{equation}
where the composition $M\circ N$ of linear operators $M$ and $N$ is written briefly as $MN$.
Indeed, the first variation of the matrix-valued map $(X,Y)\mapsto W$ in (\ref{UW}) is  computed as
\begin{align}
\nonumber
    \delta W
     &=
    (\delta X) Y^{-1}X^{\rT}
    +
    X Y^{-1}(\delta X)^{\rT}
    -
    XY^{-1}(\delta Y)Y^{-1}X^{\rT} \\
\label{deltaW}
     &=
    (\delta X) U^{\rT} + U \delta X^{\rT} - U(\delta Y)U^{\rT},
\end{align}
which implies (\ref{dW}).
The Frechet  derivatives of the composite function $V= f\circ W$ from (\ref{VW}) are
\begin{equation}
\label{dXV}
    \!\d_X\! V
    \!\!=\!
    (\d_X\! W)^{\dagger}(f')=
    (
    \[[[
        I,
        U
    \]]]
    \!+\!
    \cT
    \[[[
        U^{\rT},
        I
    \]]]    )(f')
    \!=\!
    2f'U,\!\!\!\!\!
\end{equation}
where $f'$ is the $\mS_n$-valued Frechet derivative of the function $f$, and the relations $\[[[\alpha, \beta\]]]^{\dagger} = \[[[\alpha^{\rT}, \beta^{\rT}\]]]$ and $\cT^{\dagger} = \cT$ are used, with $(\cdot)^{\dagger}$ the adjoint with respect to the Frobenius inner product of matrices. By a similar reasoning,
\begin{equation}
\label{dYV}
    \d_Y V
    =
    (\d_Y W)^{\dagger}(f')
    =
    -
    U^{\rT}
    f'
    U.
\end{equation}
Substitution of (\ref{dXV}) and (\ref{dYV}) into the left-hand side of (\ref{MV0}) yields
\begin{equation}
\label{myPDE4}
        M(V)
        =
        (X^{\rT} - YU^{\rT}) f'U=0,
\end{equation}
so that the function $V$ given by (\ref{VW}) indeed satisfies the PDE. We will now show that   (\ref{VW}) is, in fact, the general smooth solution of the PDE (\ref{MV0}) under the additional condition $\det X \ne 0 $, in which case both $U$ and $W$ in (\ref{UW}) are nonsingular. To this end, using the ideas of the method of characteristics for conventional PDEs \cite{E_1998,V_1971},
we will prove that any smooth function $V$ satisfying the PDE (\ref{MV0}) is constant on every connected component of the preimage
\begin{eqnarray}
\nonumber
    W^{-1}(S)
    & :=&
    \{
        (X,Y)\in \mR^{n\x n}\x \mS_n:\\
\label{invW}
        & &
        \det Y \ne 0, \,
        XY^{-1}X^{\rT}
        = S
    \}
\end{eqnarray}
of any given nonsingular matrix $S \in \mS_n$ under the map $(X,Y)\mapsto W$, with $W^{-1}$ the functional inverse. Indeed, let $[0,1] \ni s\mapsto (X,Y)\in W^{-1}(S)$ be a smooth curve lying in this set. By differentiating  the map $W$ along such a curve and using (\ref{deltaW}), it follows that
$
    0
    =
    \dot{W}
    =
    \dot{X}U^{\rT} + U\dot{X}^{\rT} - U\dot{Y}U^{\rT}
$,
which, in view of $\det U \ne 0$, allows $\dot{Y}$ to be expressed in terms of $\dot{X}$ as
\begin{equation}
\label{Ydot}
    \dot{Y} = U^{-1}\dot{X} + \dot{X}^{\rT}U^{-\rT},
\end{equation}
where $\dot{(\,)}:= \d_s$. Hence, differentiation of $V$ as a composite function along the curve yields
\begin{eqnarray}
\nonumber
    \dot{V}
    & = &
    \bra
        \d_X V,
        \dot{X}
    \ket
    +
    \bra
        \d_Y V,
        \dot{Y}
    \ket \\
\nonumber
& = &
    \bra
        \d_X V,
        \dot{X}
    \ket
    +
    \bra
        \d_Y V,
        U^{-1}\dot{X} + \dot{X}^{\rT}U^{-\rT}
    \ket \\
\label{Vdot}
    & = &
    \bra
        \d_X V
        +
        2 U^{-\rT} \d_Y V,
        \dot{X}
    \ket,
\end{eqnarray}
where use is made of (\ref{Ydot}) and
the symmetry of the matrix $\d_YV$. Since the PDE (\ref{MV0}) implies that
$
            \d_X V
        +
        2 U^{-\rT} \d_Y V
        =
        2 X^{-\rT}
        M(V)
        =0
$,
then (\ref{Vdot}) yields $\dot{V} = 0$. Hence, every smooth solution $V$ of (\ref{MV0}) is constant over any connected component of the set $W^{-1}(S)$ from (\ref{invW}). Indeed, existence of two distinct points,  which are connected by a smooth curve in $W^{-1}(S)$ and such that $V$ takes different values at these endpoints, would contradict the constancy of $V$ along any such curve established above. Thus, $V(X,Y)$ can only depend on $X$ and $Y$ through their special combination $XY^{-1}X^{\rT}$, and the ODE (\ref{Ydot}) generates characteristic curves on which  smooth solutions of the PDE are constant.
\end{proof}

Note that since (\ref{VW}) involves the matrix inverse $Y^{-1}$, the explicit representation of the solution would be hard to guess by treating (\ref{MV0}) as a system of scalar PDEs.

\begin{lemma}
\label{lem:partsol2}
Suppose $f: \mA_n\to \mR$ is a Frechet differentiable function. Then the function $V: \mR^{n\x n}\to \mR$ defined by
\begin{equation}
\label{VX}
    V(X)
    :=
    f(XJ_0 X^{\rT}),
\end{equation}
with $J_0$ the canonical antisymmetric matrix of order $n$ from (\ref{J0}), satisfies the PDE
\begin{equation}
\label{VXPDE}
    \bS(X^{\rT} \d_X V J_0)
    =
    0.
\end{equation}
Moreover, (\ref{VX}) describes the general solution of (\ref{VXPDE}) among Frechet differentiable functions of $X$ over any connected component of the set $\det X\ne 0$.
\end{lemma}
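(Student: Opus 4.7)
The plan is to mirror the structure of the proof of Lemma~\ref{lem:partsol1}: first verify by a direct chain-rule calculation that any function of the form (\ref{VX}) solves the PDE (\ref{VXPDE}), and then use the method of characteristics to show that every Frechet differentiable solution of (\ref{VXPDE}) on a connected component of the open set $\{X\in\mR^{n\x n}:\det X\ne 0\}$ depends on $X$ only through the combination $W:=XJ_0X^{\rT}$.

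For sufficiency, I would first note that since $J_0$ is antisymmetric, so is $W$, and hence the Frechet derivative $f'$ of $f:\mA_n\to\mR$ naturally takes values in $\mA_n$. Setting $U:=XJ_0$ and using $J_0^{\rT}=-J_0$, the first variation is $\delta W = -(\delta X)U^{\rT} + U(\delta X)^{\rT}$. A short computation with the Frobenius inner product, together with the antisymmetry $f'^{\rT}=-f'$, yields $\d_X V = -2f'XJ_0$. Substituting this into the left-hand side of (\ref{VXPDE}) and invoking $J_0^2=-I_n$ produces $X^{\rT}\d_X V\, J_0 = 2X^{\rT}f'X$, whose antisymmetry (inherited from $f'$) immediately delivers $\bS(X^{\rT}\d_X V\, J_0)=0$.

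For necessity, I would fix a nonsingular $S\in\mA_n$ and consider the level set $\Sigma_S:=\{X\in\mR^{n\x n}:\det X\ne 0,\ XJ_0X^{\rT}=S\}$. Along any smooth curve $s\mapsto X$ lying in $\Sigma_S$, differentiating the constraint and using the invertibility of $U=XJ_0$ (which follows from $\det X\ne 0$ and $\det J_0=1$) shows that $U^{-1}\dot X\in\mS_n$. Consequently $\dot V = \bra \d_X V, \dot X\ket = \bra \bS(U^{\rT}\d_X V), U^{-1}\dot X\ket$, and the problem reduces to showing that the PDE (\ref{VXPDE}) forces $\bS(U^{\rT}\d_X V)=0$. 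I would rewrite (\ref{VXPDE}) as $X^{\rT}\d_X V\, J_0 = J_0(\d_X V)^{\rT} X$ and multiply by $J_0$ successively from the left and right, again using $J_0^2=-I_n$, to arrive at $U^{\rT}\d_X V = -(U^{\rT}\d_X V)^{\rT}$, i.e.\ $U^{\rT}\d_X V\in\mA_n$. This gives $\dot V=0$ on every characteristic curve in $\Sigma_S$.

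The main obstacle is promoting this local constancy along characteristics into the global representation (\ref{VX}). This requires path-connectedness of $\Sigma_S\cap\Omega$ for each connected component $\Omega$ of the domain, together with surjectivity of $X\mapsto XJ_0X^{\rT}$ onto the open set of nonsingular antisymmetric matrices of order $n$. The latter follows from the fact that every nonsingular element of $\mA_n$ is congruent to $J_0$ (the canonical symplectic normal form), while the former reduces to a dimension count: the map $W$ is a submersion on $\{\det X\ne 0\}$, so each fiber is a smooth manifold of codimension $n(n-1)/2$, and standard arguments from symplectic linear algebra ensure its connectedness. Once this is in place, one defines $f(S):=V(X)$ for any $X\in\Sigma_S$, with well-posedness guaranteed by the constancy established above, and verifies Frechet differentiability of $f$ via a local section of $W$ constructed by the implicit function theorem.
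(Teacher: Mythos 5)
Your proposal is correct and follows essentially the same route as the paper's proof: the same chain-rule computation giving $\d_X V=-2f'XJ_0$ for sufficiency, and the same characteristics argument on the level sets of $X\mapsto XJ_0X^{\rT}$ for necessity, since your conditions $U^{-1}\dot X\in\mS_n$ and $U^{\rT}\d_X V\in\mA_n$ (with $U=XJ_0$) are equivalent restatements of the paper's statement that $X^{-1}\dot X$ is Hamiltonian together with its $\bS$/$\bA$-orthogonality step. The only difference is that you make explicit the fiber-connectedness (cosets of the symplectic group) and the differentiability of $f$ via local sections, bookkeeping the paper leaves implicit.
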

\begin{proof}
Since
$$
    \d_X (XJ_0 X^{\rT})
    =
    \[[[
        I,
        J_0 X^{\rT}
    \]]]
    +
    \[[[
        XJ_0,
        I
    \]]]
    \cT,
$$
then differentiation of (\ref{VX}) as a composite function of $X$ yields
$$    \d_X V
    =
    -f'XJ_0
    -(J_0 X^{\rT}f')^{\rT}
    =
    -2f'XJ_0,
$$
where we have also used the antisymmetry of the matrix $f'$. Hence,
$
    X^{\rT}\d_X V J_0 = 2X^{\rT} f' X
$
is antisymmetric whence (\ref{VXPDE}) follows. Now, to prove the converse, let $[0,1]\ni s \mapsto X \in \mR^{n\x n}$ be an arbitrary smooth curve in the set $\{X \in \mR^{n \x n}:\, XJ_0 X^{\rT}= \Omega\}$, where $\Omega \in \mA_n$ is a given nonsingular antisymmetric matrix. By differentiating $XJ_0 X^{\rT}$ along such a curve, it follows that
\begin{equation}
\label{XJX}
    \dot{X}J_0 X^{\rT} + X J_0 \dot{X}^{\rT}= 0.
\end{equation}
Since $\det X \ne 0$, then the left multiplication of (\ref{XJX}) by $X^{-1}$ and right multiplication by $X^{-\rT}$ yields
\begin{equation}
\label{XX}
    \bA(X^{-1}\dot{X} J_0)= 0,
\end{equation}
with $\bA$ the \textit{antisymmetrizer} defined by
the orthogonal projection onto the subspace $\mA_n$ of real antisymmetric matrices of order $n$ as
$$    \bA(N)
    :=
    N- \bS(N)
    =
    (N-N^{\rT})/2.
$$
In view of (\ref{bH}), the relation (\ref{XX}) is equivalent to the matrix $X^{-1}\dot{X}$ being Hamiltonian. Now, if  $V$ is an arbitrary smooth solution of the PDE (\ref{VXPDE}), then its derivative along the curve is
\begin{align}
\nonumber
    \dot{V}
    &=
    \bra
        \d_X V,
        \dot{X}
    \ket
    =
    \bra
        X^{\rT}\d_X V J_0,
        X^{-1}\dot{X} J_0
    \ket \\
\nonumber
     &=
    \bra
        \bS(X^{\rT}\d_X V J_0),
        \bS(X^{-1}\dot{X} J_0)
    \ket\\
\label{VXdot}
    &+
    \bra
        \bA(X^{\rT}\d_X V J_0),
        \bA(X^{-1}\dot{X} J_0)
    \ket
%
    =0,
\end{align}
where the Frobenius inner product is partitioned according to the orthogonal decomposition $\mR^{n \x n} = \mS_n \oplus \mA_n$. In view of (\ref{VXdot}), the solution $V$ is constant over any connected component of the set where $XJ_0 X^{\rT}$ is a given nonsingular matrix, thus implying the representation
(\ref{VX}).
\end{proof}
\subsection{General solution}\label{sec:gen}

The following theorem shows that the particular solutions of the PDE (\ref{MVJ}), obtained in the previous section, can be ``assembled'' into the general solution.

\begin{theorem}
\label{th:gensol}
Suppose $f: \mR^{n\x n}\to \mR$ is a Frechet differentiable function. Then the function
\begin{equation}
\label{VXY}
    V(X,Y)
    :=
    f(X(Y^{-1} + J_0)X^{\rT}),
\end{equation}
defined for $X\in \mR^{n\x n}$ and $Y \in \mS_n$, with $\det Y \ne 0$, satisfies the PDE (\ref{MVJ}). Moreover, (\ref{VXY}) is a general smooth solution of the PDE over any connected component of the set $\{(X,Y)\in \mR^{n\x n}\x \mS_n:\, \det(XY)\ne 0\}$.
\end{theorem}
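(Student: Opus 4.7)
My plan is to split the argument into two directions. The forward direction is a direct Frechet-differentiation computation showing that every $V$ of the form (\ref{VXY}) solves the PDE (\ref{MVJ}). The reverse direction employs the method of characteristics, adapted from the proofs of Lemmas~\ref{lem:partsol1} and~\ref{lem:partsol2}, to show that no other smooth solutions exist on a connected component of the open set $\{\det(XY)\ne 0\}$.

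For the forward direction, abbreviate $K := Y^{-1}+J_0$ and $\Phi(X,Y):=XKX^{\rT}$, so $V=f\circ\Phi$. Using $\delta(Y^{-1})=-Y^{-1}(\delta Y)Y^{-1}$ and the chain rule, a careful computation yields
\[
\d_X V = f'XK^{\rT} + (f')^{\rT}XK,\qquad \d_Y V = -Y^{-1}X^{\rT}\bS(f')XY^{-1},
\]
where the projection onto $\mS_n$ in $\d_Y V$ reflects the symmetry of the argument $Y$. Writing $S:=X^{\rT}f'X$ and using the identities $K+K^{\rT}=2Y^{-1}$ and $K-K^{\rT}=2J_0$, substitution into $M(V)=\tfrac12 X^{\rT}\d_X V + Y\d_Y V$ causes the symmetric contributions to cancel and leaves
\[
M(V) = -\bA(S)J_0 = -X^{\rT}\bA(f')XJ_0.
\]
Because $J_0^2=-I$, this gives $M(V)J_0 = X^{\rT}\bA(f')X$, which is antisymmetric since $\bA(f')\in \mA_n$ and the map $N\mapsto X^{\rT}NX$ preserves antisymmetry. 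Hence $M(V)$ is skew-Hamiltonian and $\bH(M(V))=0$, establishing (\ref{MVJ}).

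For the reverse direction, let $(X_0,Y_0)$ satisfy $\det(X_0 Y_0)\ne 0$ and let $W_0:=\Phi(X_0,Y_0)$. Splitting the infinitesimal constraint $\delta\Phi=0$ into its symmetric and antisymmetric parts (via $K\pm K^{\rT}$), the antisymmetric part reduces to $\dot X J_0 X^{\rT}+X J_0 \dot X^{\rT}=0$, equivalent, exactly as in Lemma~\ref{lem:partsol2}, to $H:=X^{-1}\dot X$ being Hamiltonian; the symmetric part then forces $\dot Y = YH + H^{\rT}Y$, which is automatically symmetric. The tangent space to $\Phi^{-1}(W_0)$ at $(X_0,Y_0)$ is thus $n(n+1)/2$-dimensional, matching the codimension of a generic level set of the $n^2$-dimensional target map $\Phi$. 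For any smooth solution $V$ of (\ref{MVJ}), differentiating along such a characteristic direction and using the symmetry of $\d_Y V$ and $Y$ together with the identity $\Tr(SH^{\rT}Y)=\Tr(SYH)$ valid for symmetric $S,Y$, one obtains
\[
\dot V = \bra X^{\rT}\d_X V + 2Y\d_Y V, H\ket = 2\bra M(V), H\ket.
\]
Since $M(V)$ is skew-Hamiltonian and $H$ is Hamiltonian, and these two subspaces are orthogonal in the Frobenius inner product, $\dot V = 0$. Consequently $V$ is constant along every characteristic curve, and together with the local submersion property of $\Phi$ this yields $V = f\circ \Phi$ for some Frechet differentiable $f$ on each connected component of $\{\det(XY)\ne 0\}$.

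The main obstacle is the last step: passing from constancy of $V$ along the individual characteristic curves to its global constancy on each connected component of a level set of $\Phi$. This requires the Hamiltonian vector fields $(XH,\,YH+H^{\rT}Y)$, indexed by Hamiltonian matrices $H$, to act transitively on each leaf. The justification is delivered by the involutivity of the PDE system verified in Section~\ref{sec:Frobenius} via Lemma~\ref{lem:Frob}, combined with the Frobenius integration theorem: the connected components of the level sets of $\Phi$ coincide with the maximal integral manifolds of the $n(n+1)/2$-dimensional distribution spanned by these vector fields, and the smoothness of the resulting factorization $V=f\circ \Phi$ follows from the existence of smooth local sections of the submersion $\Phi$.
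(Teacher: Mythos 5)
Your proposal is correct, but the uniqueness half takes a genuinely different route from the paper's. For the first claim you verify the PDE directly: your derivatives agree with the paper's (\ref{dXVXY})--(\ref{dYVXY}) and your identity $M(V)=-X^{\rT}\bA(f')XJ_0$, which is antisymmetric after right multiplication by $J_0$, is exactly the direct verification the paper mentions as an alternative to deducing the claim from Lemmas~\ref{lem:partsol1} and \ref{lem:partsol2} through the splitting $V=g(XY^{-1}X^{\rT},XJ_0X^{\rT})$ with $g(\sigma,\omega)=f(\sigma+\omega)$. For generality, however, the paper never runs characteristics on the combined invariant: it changes independent variables $(X,Y)\mapsto (X,W)$ with $W=XY^{-1}X^{\rT}$ from (\ref{UW}), shows in (\ref{Mh}) that $M(V)=\frac{1}{2}X^{\rT}\d_X h$ in the new variables because the $\d_W h$ terms cancel, and then quotes Lemma~\ref{lem:partsol2} with $W$ frozen, finally reassembling $\varphi(W,XJ_0X^{\rT})$ into a single $f$. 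You instead treat the full map $\Phi(X,Y)=X(Y^{-1}+J_0)X^{\rT}$ at once: on $\{\det(XY)\ne 0\}$ the kernel of $d\Phi$ is precisely the family $\dot X=XH$, $\dot Y=YH+H^{\rT}Y$ with $H$ Hamiltonian, and $\dot V=2\bra M(V),H\ket=0$ by the orthogonality of the skew-Hamiltonian matrix $M(V)$ to the Hamiltonian subspace; this is a clean, unified argument that bypasses the change of variables and makes the two auxiliary lemmas unnecessary for the theorem itself, at the price of having to analyse the full kernel of $d\Phi$ rather than one variable at a time. One simplification is available in your last step: since your kernel computation already shows that $\Phi$ is a submersion on this set, its level sets are embedded submanifolds whose tangent spaces coincide with your characteristic distribution, so the vanishing of $dV$ on those tangent spaces gives constancy of $V$ on each connected component of a level set directly; the appeal to Lemma~\ref{lem:Frob} and the Frobenius theorem for transitivity is superfluous, as involutivity is automatic for the tangent distribution of the foliation by level sets of a submersion. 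With that observation, your factorization $V=f\circ\Phi$ and its local smoothness via smooth local sections sits at the same level of rigor as the paper's own concluding step.
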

\begin{proof}
Since the function (\ref{VXY}) can be represented as $V = g(XY^{-1}X^{\rT},XJ_0 X^{\rT})$, where  $g:\mS_n \x \mA_n\to \mR$ is another Frechet differentiable function given by
\begin{equation}
\label{fg}
    g(\sigma, \omega)
    :=
    f(\sigma+\omega),
\end{equation}
then the first claim of the theorem follows from the corresponding statements of Lemmas~\ref{lem:partsol1} and \ref{lem:partsol2}. The fulfillment of the PDE (\ref{MVJ}) for the function (\ref{VXY}) can also be verified directly using its partial Frechet derivatives
\begin{eqnarray}
\label{dXVXY}
    \d_X V
    & = &
    2(\bS(f')XY^{-1} - \bA (f')XJ_0),\\
\label{dYVXY}
    \d_Y V
    & = &
    -Y^{-1} X^{\rT} \bS(f') XY^{-1},
\end{eqnarray}
which follow from the relations $\d_{\sigma} g = \bS(f')$ and $\d_{\omega}g= \bA (f')$ for the function $g$ in (\ref{fg}). Now, to prove that (\ref{VXY}) is, in fact, the general solution of the PDE over any connected  component of the set $\det (XY)\ne 0$, we employ the transformation $(X,Y)\mapsto (X,W)$, with  $W$ given by (\ref{UW}). This is a diffeomorphism since, for any nonsingular $X$, the matrix $Y$ is uniquely and smoothly recovered from $W$ as $Y = X^{\rT} W^{-1} X$. The action of the operator $M$ from (\ref{M}) on the function  $h(X,W):= V(X,Y)$ written in the new independent variables $X$ and $W$ takes the form
\begin{eqnarray}
\nonumber
    M(V)
    & = &
    \frac{1}{2}
    X^{\rT} (\d_X h + (\d_X W)^{\dagger}(\d_W h))\\
\label{Mh}
    & &+
    X^{\rT} W^{-1} X (\d_Y W)^{\dagger}(\d_W h)
     =
    \frac{1}{2} X^{\rT}\d_X h,
\end{eqnarray}
where the terms containing $\d_W h$ cancel each other due to the structure of the operators $\d_X W$ and $\d_Y W$ from (\ref{dW}) employed in the proof of Lemma~\ref{lem:partsol1}. Substitution of (\ref{Mh}) into the PDE (\ref{MVJ}) leads to the PDE $\bS(X^{\rT} \d_X h J_0) = 0$. By considering this last PDE for a fixed but otherwise arbitrary nonsingular $W \in \mS_n$, and applying Lemma~\ref{lem:partsol2}, it follows that its general solution  over any connected component of the set $\det X \ne 0$ is described by $h(X,W) = \varphi(W, XJ_0 X^{\rT})$, where $\varphi: \mS_n \x \mA_n\to \mR$ is a Frechet differentiable function. Since any such $\varphi$ can be identified with a Frechet differentiable function $f: \mR^{n \x n}\to \mR$ by $f(N) = \varphi(\bS(N), \bA (N))$, cf. (\ref{fg}),  this proves the second claim of Theorem~\ref{th:gensol}.
\end{proof}

Finally, Theorem~\ref{th:Vshape} is obtained by applying Theorem~\ref{th:gensol} to the minimum cost  function $V_t(P)$ for fixed but otherwise arbitrary $t$ and $P_{11}$, assuming its Frechet smoothness on the set where the blocks of the covariance matrix $P$ from (\ref{Pblocks}) satisfy $\det P_{12}\ne 0$ and $P_{22}\succ 0$.


\section{Special linear operators on matrices}\label{sec:class}

Following \cite{VP_2011}, we define, for any matrices $\alpha \in \mR^{s \x p}$ and  $\beta\in \mR^{q\x t}$,   a linear operator $\[[[\alpha, \beta\]]]: \mR^{p\x q}\to \mR^{s\x t}$ by
\begin{equation}
\label{cL}
    \[[[\alpha, \beta\]]](X)
    :=
    \alpha X \beta.
\end{equation}
The generalization of this construct to matrices $\alpha_1, \ldots, \alpha_r \in \mR^{s\x p}$ and $\beta_1, \ldots, \beta_r\in \mR^{q\x t}$, with $r$ an arbitrary positive integer, leads to a linear operator
\begin{equation}
\label{cLsum}
    \[[[
        \alpha_1,\beta_1
        \mid
        \ldots
        \mid
        \alpha_r, \beta_r
    \]]]
    :=
    \sum_{k=1}^{r}
    \[[[
        \alpha_k,  \beta_k
    \]]],
\end{equation}
where the matrix pairs are separated by ``$\mid$''s.
Of particular importance are
self-adjoint linear operators on the Hilbert space
$\mR^{p\x q}$  of the form (\ref{cLsum}) where
$\alpha_1, \ldots, \alpha_r\in \mR^{p\x p}$ and $\beta_1, \ldots, \beta_r\in \mR^{q\x
q}$ are such that for any $k=1, \ldots, r$, the matrices $\alpha_k$ and $\beta_k$ are either both symmetric or both antisymmetric. Such an operator  (\ref{cLsum}) is referred to as a \textit{self-adjoint operator of grade} $r$, with the self-adjointness understood in the sense of the Frobenius inner product on $\mR^{p\x q}$, so that  $\[[[ \alpha, \beta\]]]^{\dagger} = \[[[\alpha^{\rT}, \beta^{\rT}\]]]$.
%
\begin{lemma}
\label{lem:spec}\cite{VP_2011}
If $\alpha \in \mR^{p\x p}$ and $\beta\in \mR^{q\x q} $ are both antisymmetric, then the spectrum of $\[[[\alpha, \beta\]]]$ is symmetric about the origin. If $\alpha$ and $\beta $ are both symmetric and positive (semi-) definite, then $\[[[\alpha, \beta\]]]$ is positive (semi-) definite, respectively.
\end{lemma}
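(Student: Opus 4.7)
I would prove Lemma~\ref{lem:spec} via the standard matrix vectorization $\col:\mR^{p\x q}\to \mR^{pq}$, under which the operator $\[[[\alpha,\beta\]]]$ is represented by the Kronecker product. The plan is to first establish the representation
$$
    \col(\alpha X \beta) = (\beta^{\rT}\ox \alpha)\col(X),
$$
which is a standard identity, and then read off both claims of the lemma from spectral properties of $\beta^{\rT}\ox \alpha$.

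For the antisymmetric case, since the eigenvalues of $\beta^{\rT}\ox \alpha$ are the products $\mu\lambda$ with $\lambda\in \spec(\alpha)$ and $\mu\in \spec(\beta^{\rT})=\spec(\beta)$, it suffices to observe that the spectrum of any real antisymmetric matrix is symmetric about the origin (its eigenvalues are purely imaginary and come in pairs $\pm i\sigma$, together with possible zeros). Hence, whenever $\mu\lambda \in \spec(\[[[\alpha,\beta\]]])$, the number $(-\mu)\lambda=-\mu\lambda$ also belongs to the spectrum, giving the required symmetry about zero.

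For the positive (semi-)definite case, I would use the classical fact that the Kronecker product of two positive (semi-)definite symmetric matrices is positive (semi-)definite. Since $\alpha$ and $\beta$ are symmetric, so are $\beta^{\rT}\ox \alpha$, and the vectorization identity above intertwines $\[[[\alpha,\beta\]]]$ with this Kronecker product while preserving the Frobenius inner product on $\mR^{p\x q}$ (which corresponds to the Euclidean inner product on $\mR^{pq}$). Therefore,
$$
    \bra X, \[[[\alpha,\beta\]]](X)\ket = \col(X)^{\rT}(\beta\ox \alpha)\col(X) \>  0,
$$
with strict inequality for $X\ne 0$ in the positive definite case (since $\beta\ox \alpha\succ 0$ has trivial kernel). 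Alternatively, by using the spectral decompositions $\alpha=\sum_i \lambda_i u_iu_i^{\rT}$ and $\beta=\sum_j \mu_j v_jv_j^{\rT}$ with $\lambda_i,\mu_j\>0$, a direct computation gives
$$
    \bra X, \alpha X\beta\ket = \sum_{i,j}\lambda_i\mu_j(u_i^{\rT}Xv_j)^2,
$$
making nonnegativity manifest and showing that the sum vanishes only if $X=0$ whenever all $\lambda_i,\mu_j>0$.

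There is no serious obstacle here: the lemma follows from elementary linear algebra once the vectorization representation is invoked. The only thing to be careful about is the implicit interpretation of the ``spectrum'' of a real linear operator in the first claim — it should be understood as the set of complex eigenvalues of its matrix representative, so that the pairing of eigenvalues of antisymmetric matrices takes place within $i\mR$ and is inherited by the product spectrum.
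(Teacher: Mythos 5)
The paper itself does not prove this lemma: it is stated with a citation to \cite{VP_2011} and no argument is given. Your proposal therefore supplies a proof from scratch, and it is correct. The vectorization/Kronecker-product route is the natural one and is exactly the device the paper alludes to in the paragraph immediately following the lemma, where inverses of operators of the form $\[[[\alpha_1,\beta_1\mid\cdots\mid\alpha_r,\beta_r\]]]$ are computed through $\gamma:=\sum_k\beta_k^{\rT}\ox\alpha_k$. Both parts of your argument are sound: for antisymmetric $\alpha$, $\beta$, the eigenvalues of $\beta^{\rT}\ox\alpha$ are the pairwise products $\mu\lambda$, and since $-\mu$ is an eigenvalue of $\beta$ whenever $\mu$ is (with equal multiplicity), the products inherit the $\pm$ pairing; for symmetric positive (semi-)definite $\alpha$, $\beta$, vectorization is a Frobenius-to-Euclidean isometry carrying $\[[[\alpha,\beta\]]]$ to the positive (semi-)definite Kronecker product $\beta\ox\alpha$. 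Your alternative spectral-decomposition computation $\bra X,\alpha X\beta\ket=\sum_{i,j}\lambda_i\mu_j(u_i^{\rT}Xv_j)^2$ is equally valid and makes the positivity manifest without appealing to the Kronecker eigenvalue theorem. A small stylistic remark: in the antisymmetric case it is worth noting that $\[[[\alpha,\beta\]]]^{\dagger}=\[[[\alpha^{\rT},\beta^{\rT}\]]]=\[[[-\alpha,-\beta\]]]=\[[[\alpha,\beta\]]]$, so the operator is self-adjoint on $\mR^{p\x q}$ and its spectrum is real; your argument is consistent with this, since each product $\mu\lambda$ of two purely imaginary numbers is real, but saying so explicitly preempts any confusion about what ``symmetric about the origin'' means for the spectrum of a real-linear map.
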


Whilst the operator (\ref{cL})  with nonsingular $\alpha$ and
$\beta$ is straightforwardly
invertible, so that $\[[[\alpha,  \beta\]]]^{-1} = \[[[\alpha^{-1},\beta^{-1}\]]]$, the
inverse of the operator from (\ref{cLsum})
with $r>1$, in general,
can only be computed
using the vectorization of matrices~\cite{M_1988,SIG_1998} as
$$        \[[[
        \alpha_1,\beta_1
        \mid
        \ldots
        \mid
        \alpha_r, \beta_r
    \]]]^{-1}(Y)
    =
    \col^{-1}
    (
        \gamma^{-1}
    \col(Y)
    ),
$$
provided
$    \gamma
    :=
    \sum_{k=1}^{r}
    \beta_k^{\rT}\ox \alpha_k
$ is nonsingular.
Here, $\col(Y)$ is the vector obtained by writing the columns of a matrix $Y$  one underneath
the other.

\end{document}